\newtheorem{remark}{Remark}
\newtheorem{definition}{Definition}
\newtheorem{theorem}{Theorem}
\newtheorem{problem}{Problem}
\newtheorem{example}{Example}
\newtheorem{assumption}{Assumption}
\def\BibTeX{{\rm B\kern-.05em{\sc i\kern-.025em b}\kern-.08em
    T\kern-.1667em\lower.7ex\hbox{E}\kern-.125emX}}
\newcommand{\oomit}[1]{}
\begin{document}

%
\title{PAC Model Checking of Black-Box Continuous-Time Dynamical Systems}
%
%
%
%

\author{Bai Xue,~\IEEEmembership{Member,~IEEE}, Miaomiao Zhang, Arvind  Easwaran and Qin Li
\thanks{Corresponding Authors: Bai Xue and Miaomiao Zhang}
\thanks{B. Xue is with State Key Lab. of Computer Science, Institute of Software, CAS, and University of Chinese Academy of Sciences, Beijing, China email: xuebai@ios.ac.cn}
\thanks{Miaomiao Zhang is with School of Software Engineering, Tongji University, China email: miaomiao@tongji.edu.cn}
\thanks{A. Easwaran is with School of Computer Science and Engineering, Nanyang Technological University (NTU), Singapore email: arvinde@ntu.edu.sg.}
\thanks{Q. Li is with Shanghai Key Laboratory of Trustworthy Computing
East China Normal University, Shanghai, China email: qli@sei.ecnu.edu.cn.}
\thanks{
Manuscript received April 17, 2020; revised June 17, 2020; accepted July 6, 2020. This article was presented in the International Conference on Embedded Software 2020 and appears as part of the ESWEEK-TCAD special issue.}
\thanks{This work has been supported through grants by NSFC under grant No. 61872341, 61836005, 61972284, the CAS Pioneer Hundred Talents Program under grant No. Y8YC235015, the MoE, Singapore, Tier-2 grant \#MOE2019-T2-2-040, and the foundation of Shenzhen Institute of Artificial Intelligence and Robotics for Society and the foundation of National Trusted Embedded Software Engineering Technology Research Center.}
}

\maketitle

\begin{abstract}
In this paper we present a novel model checking approach to finite-time safety verification of black-box continuous-time dynamical systems within the framework of probably approximately correct (PAC) learning. The black-box dynamical systems are the ones, for which no model is given but whose states changing continuously through time within a finite time interval can be observed at some discrete time instants for a given input. The new model checking approach is termed as PAC model checking due to incorporation of learned models with correctness guarantees expressed using the terms error probability and confidence. Based on the error probability and confidence level, our approach provides statistically formal guarantees that the time-evolving trajectories of the black-box dynamical system over finite time horizons  fall within the range of the learned model plus a bounded interval, contributing to insights on the reachability of the black-box system and thus on the satisfiability of its safety requirements. The learned model together with the bounded interval is obtained by scenario optimization, which boils down to a linear programming problem. Three examples demonstrate the performance of our approach.
\end{abstract}
\begin{IEEEkeywords}Black-box Dynamical Systems; PAC Model Checking; Linear Programming.
\end{IEEEkeywords}

\IEEEdisplaynontitleabstractindextext

%
\IEEEpeerreviewmaketitle

\section{Introduction}
\label{Int}
The complexity of today's technological applications induces a quest for automation, leading to many black-box intelligent cyber-physical systems and thus being difficult to reason about \cite{lee2008cyber}. Many of these systems operate in safety-critical context and hence safety-critical systems themselves \cite{rajkumar2010cyber}. Therefore, reasonable performance guarantees should be obtained before the systems are deployed.

Black-box checking, introduced by Peled at al. \cite{peled1999black}, is often used for verifying non-stochastic black-box systems, based on experiments that interface with them. It performs checks on the system itself. The black-box checking is a combination of model checking and testing: model checking \cite{clarke1994model} checks properties of a model of the system, but not the system itself. In contrary, testing is usually applied to the actual system and checks whether the system conforms with the model, further serving to improve the model. They are two complementary approaches for enhancing the reliability of black-box systems. In the black-box checking, whenever a model is created, model checking may reveal a fault in the system or show that the model was not good enough and needs to be learned further if the fault is spurious. If model checking does not reveal a fault, equivalence between the model and the black-box system is checked via testing. In case, non-equivalence is detected, then the model needs to be further learned. The checking-testing-learning repeated process is costly generally. Recently, a method combining optimization-based falsification and black-box
checking was proposed to falsify specifications for black-box cyber-physical systems in \cite{waga2020}.

Another technique to verification of black-box systems is statistical model checking (SMC) \cite{sen2004statistical,younes2005}. SMC is pioneered by Younes and
Simmons in the discrete case in \cite{younes2002}, which is based on Sequential Probability Ratio Test \cite{Wald1945}. It is a compromise between verification and testing, which is based on sampling executions of the system and then deciding whether the samples provide a statistical evidence for the satisfaction or violation of the specification based on hypothesis testing \cite{ReijsbergenBSH15}. SMC is now widely accepted in various research areas such as software engineering, in particular for industrial applications \cite{clarke2011}, or even for solving problems originating from systems biology \cite{clarke2008}. There are several reasons for this success. First, SMC is very simple to understand, implement and use. Second, it does not require extra modelling or specification effort, but simply an executable system that can be simulated and checked against state-based properties. Third, it avoids the state space explosion in verification and thus can be applied to analyze systems with large state spaces. Consequently, there are variety of SMC tools such as PLASMA-Lab \cite{boyer2013plasma}, Ymer \cite{younes2005ymer}, VeStA \cite{SenVA05}, MRMC \cite{katoen2011ins}, MC2 \cite{grosu2005monte}, UPPAAL-SMC \cite{david2015uppaal} and so on. In order to further improve the efficiency of SMC, Bayesian SMC was proposed in \cite{jha2009bayesian,zuliani2013}, which is a SMC based on Bayesian statistics.  The aforementioned SMC approaches for black-box systems are free of mathematical models and perform checks on the system itself by sampling executions of the system. However, the usefulness of mathematical models is well documented. The mathematical models not only help us to understand the system, but also are instrumental to yield insight into the complex processes involved in the system by extracting the essential meaning of some hypotheses. Also, they allow to study the effects of changes in their components and/or environmental conditions on the system's trajectories, i.e., they allow the control and optimization of the system. Thus, the introduction of mathematical models with appropriate degree of complexity into SMC would contribute a lot to the analysis of the black-box system, not only in the verification of its specifications but also in understanding the complex mechanisms underlying and thus further optimizing the system. \oomit{By extending the checking-testing-learning repeated process in black-box checking to probabilistic systems,}Consequently, model learning based SMC approaches are also proposed. For example, \cite{mao2011learning,mao2012learning,mao2016learning,aichernig2019} considered black-box systems modelled by Markov decision processes and inferred probabilistic models with the purpose of model checking. The work in \cite{nouri2014faster} combined stochastic learning and abstraction with respect to some property for analyzing black-box systems modelled by Markov decision processes. The work in \cite{brazdil2014} presented an approach for black-box systems modelled by Markov decision processes to unbounded reachability analysis via SMC. The technique is based on delayed Q-learning, a form of reinforcement learning. Generally, the exact learning algorithms require
checking equivalence between the model and the system, which is difficult and undecidable. Regression models were used in \cite{ fan2020parameter} for finding the regions in the parameter space that lead to satisfaction or violation of given specification with probabilistic coverage guarantees based on conformal regression. Recently, learning procedure within the PAC learning framework is proposed, e.g.,  \cite{fu2014probably,chen2016pac,ashok2019pac,park2019pac}.

In this paper we propose a novel SMC approach for finite-time safety verification of black-box continuous-time dynamical systems within the framework of PAC learning \cite{franzle2015multi}. The black-box continuous-time dynamical systems are the ones, for which no model is given but whose states changing continuously through time over finite time horizons can be observed at some discrete time instants for a given input. The proposed new model checking, also termed as PAC model checking, is built upon learned models within the framework of PAC learning. In the PAC model checking, correctness guarantees of the learned models are expressed using the terms error probability and confidence level. We show that the time-evolving trajectories of the black-box system over a specified finite time horizon fall within the range of the learned model plus a bounded interval with statistical guarantees, which is further used to characterize the satisfiability of safety requirements. Given an error probability and a confidence level, which are two fundamental parameters in PAC learning, the model together with the bounded interval is computed via scenario optimization, which is widely used for computing solutions to robust optimization problems based on finite randomization of infinite constraints \cite{calafiore2006}. The scenario optimization, which finally boils down to a linear program in our approach, is constructed from a family of independent and identically distributed datum collected by executing the system. Three examples demonstrate the performance of our approach. Our contributions are summarized as follows.

1). We propose a novel PAC model checking approach for finite-time safety verification of black-box continuous-time dynamical systems. In this approach the trajectories of the black-box system over finite time horizons are shown to fall within the range of a model plus a bounded interval with error probabilities and confidence levels. This reachability analysis is instrumental in characterizing the satisfiability of safety requirements of the black-box system.

2). A linear programming based approach is proposed to synthesize the model and the bounded interval. The size of the linear programming problem could be independent of the one of the black-box system, thus rendering our approach suitable for large-scale systems.  

\subsection*{Related Work}
As mentioned above, there are many works on verifying black-box systems. In this subsection we just discuss the closely related works to the present one.

The works \cite{fu2014probably,ashok2019pac} considered (unbounded) reachability for Markov decision processes (and stochastic games in \cite{ashok2019pac}) and inferred the transition probabilities with PAC guarantees. The work \cite{park2019pac} proposed an algorithm for constructing PAC confidence sets for deep neural networks. The work in \cite{xue2019safe} computed safe inputs for a black-box system such that the system's final outputs fall within a safe range with PAC guarantees. In contrast, our approach focuses on analysis of continuous-time systems, and infers that the time-evolving trajectories of the black-box system over finite time horizons fall within the range of a model plus a bounded interval with PAC guarantees. The closest work in spirit to the present one is  \cite{chen2016pac}, which considered verification of sequential programs by learning models of the set of feasible paths of programs within the framework of PAC learning. The model learning algorithm in \cite{chen2016pac} is based on counterexample guided abstraction refinement. However, our approach considers continuous-time systems and infers an approximation to the trajectories of the system over the specified finite time horizon within the framework of PAC learning, in which linear programs are used for learning models. 

In the framework of simulation-driven reachability analysis \cite{DuggiralaMVP15}, a PAC based method was proposed for learning discrepancy functions in \cite{FanQM017} for safety verification of hybrid systems with black-box modules. The problem of learning discrepancy functions is reduced to a problem of learning linear separators. Although a PAC discrepancy function is computed in \cite{FanQM017}, a characterization on how well the trajectories satisfy the learned discrepancy function is not given and thus a formal quantitative assessment on the satisfiability of safety properties is not presented if a valid discrepancy function is not obtained. Generally, valid discrepancy functions rather than PAC ones for black-box systems are challenging to obtain. In contrast, a formal characterization of the satisfiability of safety properties is given based on the computation of PAC models in our PAC model checking method.

When the continuous-time systems of interest are modeled by ordinary differential equations or delay differential equations, and the equations are explicitly given, there are many well-developed model-based reachability analysis techniques over finite time horizons, e.g., Taylor-model method \cite{ChenAS13}, simulation-driven reachability method \cite{DuggiralaMVP15} and set-boundary reachability method \cite{9023360}, for safety verification of these systems. However, our method focuses on black-box continuous-time dynamical systems, whose mathematical abstractions are not acquired and which are only represented by a family of datum. Such systems can not be handled by  existing model-based reachability analysis techniques.

\medskip
The remainder of this paper is structured as follows. In Section \ref{Pre} we formalize the concept of black-box continuous-time dynamical systems and the problem of interest in this paper. Section \ref{PBEG} elucidates our PAC model checking approach. After demonstrating the performance of our approach on three examples in Section \ref{experiments}, we conclude this paper in Section \ref{conclusion}.

\section{Preliminaries}
\label{Pre}
In this section we present the concept of black-box continuous-time dynamical systems and the related problems, as well as a brief introduction on scenario optimization. The notations are used throughout this paper: $\mathbb{R}_{\geq 0}$ denotes the set of nonnegative real values. $\mathbb{R}_{>0}$ denotes the set of positive real values. Vectors are denoted by boldface letters. \textit{Besides, the ground truth trajectories in all examples are obtained based on the combination of Runge-Kutta simulation methods and linear interpolation methods. }
\subsection{Problem Formulation}
\label{PF}
In this paper we consider a black-box continuous-time dynamical system, whose dynamics are governed by a formula of the following form:
\begin{equation}
\label{bb}
y(t)=b(\bm{x}_0,t), 
\end{equation}
where $\bm{x}_0=(x_{0,1},\ldots,x_{0,n})^{\top}\in \mathcal{X}_0$ is the input of the system, the set $\mathcal{X}_0\subseteq \mathbb{R}^n$ is compact, $t\in [0,T]$ with $T\in \mathbb{R}_{>0}$ is the time variable, $y(t)$ is the state of the system at time $t$, and $b(\cdot,\cdot): \mathcal{X}_0\times [0,T]\rightarrow \mathbb{R}$ is the system mapping which is unknown. Besides, we have the following assumptions.

\begin{assumption}
1). The system \eqref{bb} runs well, including the on-board sensors, and thus it can provide us any family of finite datum we need. Also, the provided datum are free of noise.

2). Suppose that the time horizon $[0,T]$ is endowed with a $\sigma-$algebra $\mathcal{D}_t$ and a probability $P_t$ over $\mathcal{D}_t$ is assigned.  Also, we assume that the set $\mathcal{X}_0$ of inputs is endowed with a $\sigma-$algebra $\mathcal{D}_{\bm{x}_0}$ and that a probability $P_{\bm{x}_0}$ over $\mathcal{D}_{\bm{x}_0}$ is assigned. Throughout this paper, we use the uniform distribution $P_t$ on $[0,T]$ and $P_{\bm{x}_0}$ on $\mathcal{X}_0$ to illustrate our method, although our method is not confined to this particular distribution. 
\end{assumption}
  
The system \eqref{bb} is illustrated in Fig. \ref{illustration1}.  Given an input $\bm{x}_0 \in \mathcal{X}_0$, the trajectory of the system \eqref{bb} with the input $\bm{x}_0$ is denoted by $y_{\bm{x}_0}(\cdot): [0,T] \rightarrow \mathbb{R}$. 

\begin{figure}
\centering
\includegraphics[width=0.4\textwidth]{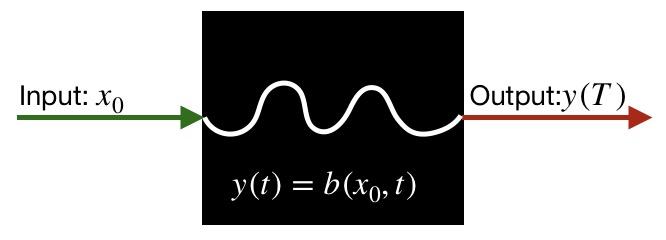} 
\caption{An illustration of the system \eqref{bb}.}
\vspace{-0.4cm}
\label{illustration1}
\end{figure}

Systems of the form \eqref{bb} are all around us, especially nowadays. For example, many AI systems such as robotics and self-driving cars are leaving academic laboratories and entering real-world applications. Unfortunately, many of these systems can not explain their results even to their makers, let alone to end-users \cite{castelvecchi2016}. They operate like black boxes, which can be viewed in terms of a family of observed datum, without any knowledge of their internal workings.

In this paper we propose a PAC model checking approach for finite-time  safety verification of the system \eqref{bb}. The safety verification problem is widely studied in computer science, e.g.,\cite{immler2015}. In our approach, the key is to obtain a model with appropriate degree of complexity, which is learned based on a family of collected datum within the framework of PAC learning and can characterize the system \eqref{bb} with correctness guarantees expressed with error probabilities and confidence levels. For computing such models, we should address the problems summarized below:
\begin{problem}
\label{prob}
\begin{enumerate}
\item[1.1] What datum should we use?
\item[1.2] How can we learn a mathematical model efficiently based on the collected datum?
\item[1.3] What is the discrepancy between the trajectories  of the learned mathematical model and the system \eqref{bb}?
\end{enumerate}
\end{problem}
After computing the model, we will address the safety verification problem below. 
\begin{problem}
\label{veri}
Given a set $\mathtt{Uns}\subseteq \mathbb{R}$ of unsafe states,  when the trajectories of the computed model are shown to avoid the set $\mathtt{Uns}$, how can we formally characterize the satisfiability of the safety property of avoiding the unsafe set $\mathtt{Uns}$ for the black-box system \eqref{bb} over the time horizon $[0,T]$?
\end{problem}

We in the sequel solve Problems \ref{prob} and \ref{veri} based on scenario optimization.

\begin{remark}
Our method can be  straightforwardly extended to vector valued mappings of the form $\bm{b}(\cdot,\cdot):\mathcal{X}_0\times [0,T] \rightarrow \mathbb{R}^{q}$ with $q >1$, but the scalar valued mappings $b(\cdot,\cdot):\mathcal{X}_0\times [0,T] \rightarrow \mathbb{R}$ are considered for ease of exposition.
\end{remark}

\subsection{Scenario Optimization}
\label{SO}
This subsection gives a brief introduction on scenario optimization. It provides statistical solutions to  robust optimization problems based on solving finite randomization of infinite convex constraints. 

A robust optimization problem of interest is as follows:
\begin{equation}
\label{rop}
\begin{split}
&\min_{\bm{\gamma}\in  \Gamma \subseteq \mathbb{R}^m} \bm{c}^{\top} \bm{\gamma}\\
&\text{\rm s.~t.~}\bm{f}_{\bm{\delta}}(\bm{\gamma})\leq 0, \forall \bm{\delta} \in \Delta,
\end{split}
\end{equation}
where $\bm{f}_{\bm{\delta}}(\bm{\gamma})$ are continuous and convex functions over the $m-$dimensional optimization variable $\bm{\gamma}$ for every $\bm{\delta} \in \Delta$. Also, the sets $\Gamma$ and $\Delta$  are convex and closed.

Generally, it is challenging to solve \eqref{rop}. The work in \cite{calafiore2006} proposed a scenario optimization approach for solving \eqref{rop} with statistically formal guarantees.
\begin{definition}
\label{sop}
Suppose that  $\Delta$ is endowed with a $\sigma-$algebra $\mathcal{D}$ and that a probability {\rm P} over $\mathcal{D}$ is assigned. The scenario optimization of \eqref{rop} is to obtain an approximate solution to \eqref{rop} via solving the convex program \eqref{cop}, which is constructed by extracting $K$ independent and identically distributed samples $(\bm{\delta}_i)_{i=1}^K$ from $\Delta$ according to the probability distribution {\rm P}:
\begin{equation}
\label{cop}
\begin{split}
&\min_{\bm{\gamma} \in \Gamma \subseteq \mathbb{R}^m} \bm{c}^{\top} \bm{\gamma}\\
&\text{\rm s.~t.~}\wedge_{i=1}^K\bm{f}_{\bm{\delta}_{i}}(\bm{\gamma}) \leq 0.
\end{split}
\end{equation}
\end{definition}

 \eqref{cop} relaxes \eqref{rop} in that it only considers a finite subset of the infinitely many constraints of \eqref{rop}. A mathematically rigorous relation, which holds irrespective of the underlying probability P, between the solutions of the two systems can be drawn  \cite{campi2009}. 
\begin{theorem} 
\label{sec1}
 If \eqref{cop} is feasible and attains a unique optimal solution $\bm{\gamma}_K^*$, and 
\begin{equation}
\label{N}
\epsilon\geq \frac{2}{K}(\ln \frac{1}{\beta}+m),
\end{equation}
where $\epsilon\in (0,1)$ and $\beta \in (0,1)$ are respectively a user-chosen error level and confidence level, then with at least $1-\beta$ confidence, $\bm{\gamma}_K^*$ satisfies all constraints in $\Delta$ but at most a fraction of probability measure $\epsilon$, i.e., $\text{\rm P}(\{\bm{\delta}\in \Delta\mid \bm{f}_{\bm{\delta}}(\bm{\gamma}_K^*)\nleq 0\}) \leq \epsilon$, where the confidence $\beta$ is the $K-$fold probability $\text{\rm P}^K$ in $\Delta^K=\Delta\times \ldots \times \Delta$, which is the set to which the extracted sample $(\bm{\delta}_1,\ldots,\bm{\delta}_K)$ belongs. 
\end{theorem}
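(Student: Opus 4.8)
The plan is to follow the classical scenario-optimization argument organized around the notion of \emph{support constraints} and Helly's theorem, and then to reduce the explicit sample condition \eqref{N} to a binomial tail estimate. I would first write $V(\bm{\gamma})$ for the violation probability $\text{P}(\{\bm{\delta}\in\Delta\mid \bm{f}_{\bm{\delta}}(\bm{\gamma})\nleq 0\})$ and restate the claim as the $K$-fold measure bound $\text{P}^K(\{(\bm{\delta}_1,\ldots,\bm{\delta}_K) : V(\bm{\gamma}_K^*) > \epsilon\}) \leq \beta$. A sampled constraint $\bm{\delta}_i$ is called a support constraint if deleting it from \eqref{cop} strictly decreases the optimal value. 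The first key step is the structural lemma that, since \eqref{cop} is a convex program with a linear objective over the $m$-dimensional variable $\bm{\gamma}$ admitting a unique optimizer, it possesses at most $m$ support constraints. This is where Helly's theorem enters: applied to the convex feasible sets in $\mathbb{R}^m$ sliced at the optimal level, it certifies that the optimizer is pinned down by at most $m$ of the $K$ constraints.

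Next I would exploit that the samples are i.i.d., hence exchangeable, to set up a counting argument over index subsets. For each $I\subseteq\{1,\ldots,K\}$ with $|I|=m$, let $\bm{\gamma}_I^*$ be the optimizer of the program retaining only the constraints indexed by $I$. By the support-constraint lemma together with uniqueness, $\bm{\gamma}_K^*$ equals $\bm{\gamma}_I^*$ precisely for the subset $I$ of its support constraints, and this happens exactly when the remaining $K-m$ sampled constraints are already satisfied at $\bm{\gamma}_I^*$. Conditioning on a fixed support set $I$ whose induced optimizer has violation exceeding $\epsilon$, each of the other $K-m$ independent constraints is satisfied with probability $1-V(\bm{\gamma}_I^*) < 1-\epsilon$, so all of them are satisfied with probability at most $(1-\epsilon)^{K-m}$. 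Summing over the $\binom{K}{m}$ choices of $I$ and using exchangeability to bound the terms uniformly yields
\begin{equation*}
\text{P}^K(\{V(\bm{\gamma}_K^*) > \epsilon\}) \leq \binom{K}{m}(1-\epsilon)^{K-m}.
\end{equation*}

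It then remains to verify that the explicit condition \eqref{N} forces the right-hand side below $\beta$, i.e., that $\epsilon \geq \frac{2}{K}(\ln\frac{1}{\beta}+m)$ implies $\binom{K}{m}(1-\epsilon)^{K-m} \leq \beta$. I would estimate the binomial coefficient through $\binom{K}{m}\leq (eK/m)^m$, bound $(1-\epsilon)^{K-m}\leq e^{-\epsilon(K-m)}$, take logarithms, and check that the stated choice of $\epsilon$ makes the resulting inequality hold; this step is routine but slightly delicate, the factor $2$ in \eqref{N} being an artifact of this bounding. I expect the genuine obstacle to lie earlier, in the structural lemma: justifying via Helly's theorem that at most $m$ constraints support the optimum, and cleanly handling potential degeneracy, is the conceptual heart of the argument, with the uniqueness hypothesis on $\bm{\gamma}_K^*$ serving precisely to make the support structure and the subset identification well defined.
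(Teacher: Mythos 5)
Your plan is sound up to, and including, the inequality
\begin{equation*}
\mathrm{P}^K\bigl(\bigl\{(\bm{\delta}_1,\ldots,\bm{\delta}_K)\;:\;V(\bm{\gamma}_K^*)>\epsilon\bigr\}\bigr)\;\le\;\binom{K}{m}(1-\epsilon)^{K-m},
\end{equation*}
where $V(\bm{\gamma}):=\mathrm{P}(\{\bm{\delta}\in\Delta\mid \bm{f}_{\bm{\delta}}(\bm{\gamma})\nleq 0\})$: the support-constraint definition, the Helly-type bound of at most $m$ support constraints, and the conditioning/union bound over size-$m$ index sets constitute exactly the classical Calafiore--Campi argument. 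The genuine gap is the step you dismissed as routine: condition \eqref{N}, with its constant $2$, does \emph{not} imply $\binom{K}{m}(1-\epsilon)^{K-m}\le\beta$, and this is not a matter of estimating $\binom{K}{m}$ more carefully --- the inequality fails for the exact value of the left-hand side. Take the parameters of the paper's own Example 1: $m=8$, $\epsilon=0.01$, $\beta=10^{-20}$, $K=10811$. Then $K\epsilon=108.11\ge 2(\ln\frac{1}{\beta}+m)\approx 108.10$, so \eqref{N} holds; yet $\ln\bigl[\binom{K}{m}(1-\epsilon)^{K-m}\bigr]\approx 63.70-108.57=-44.87$, i.e.\ $\binom{K}{m}(1-\epsilon)^{K-m}\approx 3.2\times 10^{-20}>\beta$. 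Pushed to completion, your route proves only the weaker original Calafiore--Campi sample requirement, which carries an additional summand of order $m\ln\frac{K}{m}$ inside the parentheses of \eqref{N} (equivalently, a term of order $\frac{m}{\epsilon}\ln\frac{1}{\epsilon}$ in the sample-size form).

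The constant $2$ asserted by the theorem --- which, note, the paper itself does not prove but quotes from the scenario-optimization literature --- rests on the tight Campi--Garatti ``exact feasibility'' result. There, instead of union-bounding over all $\binom{K}{m}$ subsets (which overcounts massively, because the events attached to different candidate support sets overlap in a structured way), one proves
\begin{equation*}
\mathrm{P}^K\bigl(\bigl\{V(\bm{\gamma}_K^*)>\epsilon\bigr\}\bigr)\;\le\;\sum_{i=0}^{m-1}\binom{K}{i}\epsilon^{i}(1-\epsilon)^{K-i},
\end{equation*}
the lower tail of a $\mathrm{Binomial}(K,\epsilon)$ variable at $m-1$ (with equality for nondegenerate, fully supported problems). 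A Chernoff bound on that tail, $\exp\bigl(-(K\epsilon-m+1)^2/(2K\epsilon)\bigr)$, is what produces precisely the factor $2$: when $K\epsilon\ge 2(\ln\frac{1}{\beta}+m)$ one checks $(K\epsilon-m+1)^2-2K\epsilon\ln\frac{1}{\beta}\ge m^2+2m+4\ln\frac{1}{\beta}+1>0$, hence the tail is at most $\beta$. In your numerical test case this tight bound is about $10^{-37}$, comfortably below $\beta$, whereas your union bound sits at $3.2\times 10^{-20}$. So the fix is structural --- replace the union-bound core by the exact-feasibility argument --- while the rest of your outline (support constraints, Helly, the role of uniqueness and tie-breaking) can stand as written.
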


The above conclusion still holds if the uniqueness of optimal solutions to \eqref{cop} is removed \cite{calafiore2006}, since a unique optimal solution can always be obtained according to Tie-break rule if multiple optimal solutions occur. Moreover, since $\beta$ appears under the sign of logarithm in \eqref{N}, it can be made small, like $10^{-10}$ or $10^{-20}$, without increasing $K$ significantly. Recently, scenario optimization was used to compute probably approximately safe inputs for a black-box system such that the system's final outputs fall within a safe range in \cite{xue2019safe}, and perform safety verification of hybrid systems in \cite{xue2019probably}. 
 \section{PAC Model Checking}
\label{PBEG}
In this paper we present our PAC model checking approach for safety verification of the black-box system \eqref{bb} by solving Problems \ref{prob} and \ref{veri}. 

\subsection{Datum Extraction}
\label{ED}
In this subsection we introduce what datum to use in learning a model of the system \eqref{bb} in our approach and how to obtain them, i.e., solve Problem 1.1.

We first extract a family of independent and identically distributed time instances $(t_j)_{j=1}^M$ from the time interval $[0,T]$ according to the probability distribution $P_t$. Moreover, a family of independent and identically distributed inputs $(\bm{x}_{0,i})_{i=1}^N$ is also extracted from the set $\mathcal{X}_0$ according to the probability distribution $P_{\bm{x}_0}$. The process of obtaining $(t_j)_{j=1}^M$ and $(\bm{x}_{0,i})_{i=1}^N$ does not need to run or /simulate the system \eqref{bb}. The numbers $M$ and $N$ rely on how accurate one wants the learned model to achieve. The relationship is elucidated in Subsection \ref{FELM}.

Next we need to run the system \eqref{bb} to obtain its internal datum.  For each extracted input $\bm{x}_{0,i}$, $i=1,\ldots, N$, we feed it to the system \eqref{bb} and then run it until the time $T$. In this process, the on-board sensors will help observe and record the states of the system \eqref{bb} at the time instance $t_j$, $j=1,\ldots,M$.  This is realistic for some systems nowadays, since smart sensors are taking over almost every sphere of human life.  For example, RADAR, LIDAR, GPS and computer vision are widely used to work coherently for identifying the position, velocity and other states of the vehicle. We denote the family of observed states by $(y_{i,j})_{i=1,\ldots,N,j=1,\ldots,M}$, where $y_{i,j}$ denotes the state of the system \eqref{bb} at time $t_j$  with the input $\bm{x}_{0,i}$, $i=1,\ldots,N$, $j=1,\ldots,M.$

So far, we obtain a family of datum $\Big((\bm{x}_{0,i},t_j,y_{i,j})\Big)_{j=1,\ldots,M}^{ i=1,\ldots,N,}$. Each data is a triple $(\bm{x}_{0},t,y(t))$, where $\bm{x}_0$ is the input of the system \eqref{bb},  $t\in [0,T]$ is the time instance and $y(t)$ is the state of the system \eqref{bb} with the input $\bm{x}_0$ at time $t$. The process of running the system \eqref{bb} can be regarded as a testing process. However, our method goes further than testing techniques. We meanwhile collect a family of datum and then use these datum to compute models for characterizing the system \eqref{bb} formally.

In our experiment, we assume that the input $\bm{x}_{0,i}$ is noise-free and the on-board sensors work perfectly such that the observed datum are free of noise as well, i.e., $y_{i,j}$ is the exact state of the system \eqref{bb} with the input $\bm{x}_{0,i}$ at time $t=t_{j}$, $i=1,\ldots,N$, $j=1,\ldots,M.$ This assumption may be too ideal in practice since input and sensor noise often exists. We would relax it in our future work.

\subsection{Safety Verification}
\label{FELM}
In this section we elucidate our approach for solving Problems 1.2, 1.3 and 2 based on the family of datum obtained from the process in Subsection \ref{ED}. We first consider the system \eqref{bb} with one trajectory, and then multiple trajectories and finally all trajectories from the input set $\mathcal{X}_0$.  

\subsubsection{{\rm One Trajectory Verification}}
\label{OTC}
In this subsection, we solve Problems 1.2, 1.3 and 2 for the system \eqref{bb} with a single input. Concretely, given a discrete-time trajectory of the system \eqref{bb} with the input $\bm{x}_{0,i}$, which is represented by a family of datum $\Big((\bm{x}_{0,i},t_j,y_{i,j})\Big)_{j=1}^M$ with $(t_j)_{j=1}^M$ and $(y_{i,j})_{j=1}^M$ obtained in Subsection \ref{ED}, we would compute a model  $z(t)=w(\bm{x}_{0,i},t)$ with $w(\bm{x}_{0,i},\cdot):[0,T] \rightarrow \mathbb{R}$ to characterize $y_{\bm{x}_{0,i}}(\cdot):[0,T]\rightarrow \mathbb{R}$.

\subsubsection*{{\rm PAC Models}}
\label{pacm}
In computing a model, we consider a linearly-parameterized model template  $w(c_{1},\ldots, c_{k}, \bm{x}_{0,i},t)$, $k\geq 1$ such that $w(c_{1},\ldots, c_{k},\bm{x}_{0,i},t)$ is for $t\in [0,T]$ a linear  function in $c_{1},\ldots, c_{k}$, which are unknown parameters. This model can be a polynomial function over $t$, or a more general nonlinear function over $t$. For instance, consider a two-dimensional system with input state variable $\bm{x}=(x_1,x_2)^{\top}$, $w(c_{1},c_{2}, \bm{x},t)=c_1x_1t+c_2x_2t^2$ is a linear  function in $c_1$ and $c_2$, and $w(c_{1},c_{2}, \bm{x},t)=c_1e^{x_1 x_2}t+c_2\ln{(x_2t^2)}$ is also a linear function over $c_1$ and $c_2$. Such models can be the ones parameterized with orthonormal basis functions, which are able to represent a set of physical systems \cite{horst2013global}. For ease of exposition, we use $\bm{c}$ to denote $(c_{l})_{l=1,\ldots,k}$ in the reminder of this paper. \textit{Generally, a model template of appropriate degree of complexity should be chosen in order to avoid the over-fitting issue and facilitate the reachability analysis.} In practice, engineering insight and physical knowledge would facilitate the selection of model templates.

Then we construct the following linear program over $\bm{c}$ for computing a mathematical model based on the family of  given datum $\Big((\bm{x}_{0,i},t_j,y_{i,j})\Big)_{j=1}^M$:
\begin{equation}
\begin{split}
&\min_{\bm{c},\xi} \xi\\
&\text{s.~t.~for each~} j=1,\ldots, M:\\
& w(\bm{c}, \bm{x}_{0,i},t_j)-b(\bm{x}_{0,i},t_j)\leq \xi,\\
& b(\bm{x}_{0,i},t_j)-w(\bm{c}, \bm{x}_{0,i},t_j)\leq \xi,\\
&-U_c\leq c_{l}\leq U_c, l=1,\ldots,k,\\
& 0\leq \xi \leq U_{\xi},
\end{split}
\end{equation}
which is equivalent to 
 \begin{equation}
 \label{lp}
\begin{split}
&\min_{\bm{c},\xi} \xi\\
&\text{s.~t.~for each~} j=1,\ldots, M:\\
& w(\bm{c}, \bm{x}_{0,i},t_j)-y_{i,j}\leq \xi,\\
& y_{i,j}-w(\bm{c}, \bm{x}_{0,i},t_j)\leq \xi,\\
&-U_c\leq c_{l}\leq U_c, l=1,\ldots,k,\\
& 0\leq \xi \leq U_{\xi},
\end{split}
\end{equation}
where $U_c\in \mathbb{R}_{\geq 0}$ is a pre-specified upper bound for $c_{l}$, $l=1,\ldots,k$, and $U_{\xi}\in \mathbb{R}_{\geq 0}$ is a pre-specified upper bound for $\xi$.

Denote the optimal solution to \eqref{lp} by $(\bm{c}^*,\xi^{*})$. Thus, we obtain a model $z(t)=w(\bm{c}^*,\bm{x}_{0,i},t)$, whose discrepancy with the system \eqref{bb} is characterized by two approximation parameters: error probability  $\epsilon \in (0,1)$ and confidence level $\beta \in (0,1)$.   This is formally stated in Theorem \ref{conclusion1}.
\begin{theorem}
\label{conclusion1}
Let $(\bm{c}^*,\xi^*)$ be an optimal solution to \eqref{lp}, $\epsilon\in (0,1)$, $\beta\in (0,1)$ and 
\begin{equation}
\label{NLP}
\epsilon\geq \frac{2}{M}(\ln \frac{1}{\beta}+k+1). 
\end{equation} 
Then we have that with at least $1-\beta$ confidence,  
\begin{equation}
\label{pro}
P_t\Big(\left\{t\in [0,T]\middle|\;
\begin{aligned}
&|w(\bm{c}^*, \bm{x}_{0,i},t)-b(\bm{x}_{0,i},t)|\\
&~~~~~~~~~~~~~~~~~~~~~~~~~~~~~~~~\leq \xi^*
\end{aligned}
\right\}\Big)\geq 1-\epsilon.
\end{equation}
\end{theorem}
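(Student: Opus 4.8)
The plan is to recognize the linear program \eqref{lp} as an instance of the scenario optimization program \eqref{cop} and then invoke Theorem \ref{sec1} essentially verbatim. The single-trajectory setting is exactly the situation in which the uncertain parameter ranges over the time horizon: the sampled time instances $(t_j)_{j=1}^M$ play the role of the i.i.d. scenarios $(\bm{\delta}_i)_{i=1}^K$, so I would set $K=M$, $\Delta=[0,T]$, and $\mathrm{P}=P_t$.

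First I would write down the explicit dictionary between the two problems. I would take the stacked decision variable to be $\bm{\gamma}=(\bm{c},\xi)=(c_1,\ldots,c_k,\xi)^\top\in\mathbb{R}^{k+1}$, so that the dimension is $m=k+1$; this is precisely the quantity appearing in \eqref{NLP}. The feasible box $\Gamma=\{(\bm{c},\xi): -U_c\le c_l\le U_c,\ 0\le \xi\le U_\xi\}$ is convex and closed, and the objective $\bm{c}^\top\bm{\gamma}$ of \eqref{rop} is realized by the cost vector that selects $\xi$, recovering $\min\xi$. For each fixed $t\in[0,T]$ I would encode the two inequalities of \eqref{lp} by the single convex constraint
\begin{equation*}
\bm{f}_t(\bm{\gamma})=\max\{\,w(\bm{c},\bm{x}_{0,i},t)-b(\bm{x}_{0,i},t),\ b(\bm{x}_{0,i},t)-w(\bm{c},\bm{x}_{0,i},t)\,\}-\xi\le 0,
\end{equation*}
which is equivalent to $|w(\bm{c},\bm{x}_{0,i},t)-b(\bm{x}_{0,i},t)|\le\xi$. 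Since $w(\bm{c},\bm{x}_{0,i},t)$ is by construction linear in $\bm{c}$ for each fixed $t$, both arguments of the max are affine in $\bm{\gamma}$, so $\bm{f}_t$ is convex and continuous in $\bm{\gamma}$, as \eqref{rop} requires. Imposing $\bm{f}_{t_j}(\bm{\gamma})\le 0$ for $j=1,\ldots,M$ reproduces exactly the $2M$ sampled inequalities of \eqref{lp}.

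Then I would verify the hypotheses of Theorem \ref{sec1} and read off the conclusion. Feasibility of \eqref{lp} is immediate for a sufficiently large pre-specified bound $U_\xi$ (any $\bm{c}$ in the box together with $\xi=\max_j|w(\bm{c},\bm{x}_{0,i},t_j)-y_{i,j}|$ is admissible), and uniqueness of the optimizer may be assumed without loss of generality by the Tie-break rule noted after Theorem \ref{sec1}. Applying Theorem \ref{sec1} with $m=k+1$ and $K=M$, the sample-size condition \eqref{N} becomes precisely \eqref{NLP}, and the theorem guarantees that, with confidence at least $1-\beta$ over the draw of $(t_1,\ldots,t_M)$, the optimizer $\bm{\gamma}_M^*=(\bm{c}^*,\xi^*)$ violates the time-indexed constraint on a set of $P_t$-measure at most $\epsilon$, i.e. $P_t(\{t:\ |w(\bm{c}^*,\bm{x}_{0,i},t)-b(\bm{x}_{0,i},t)|>\xi^*\})\le\epsilon$. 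Taking complements yields \eqref{pro}.

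The only genuinely delicate point I anticipate is the dimension bookkeeping: one must remember that the slack $\xi$ is itself an optimization variable, so the scenario dimension is $m=k+1$ rather than $k$, which is exactly what produces the term $k+1$ in \eqref{NLP}. Everything else — convexity of $\bm{f}_t$, closedness and convexity of $\Gamma$, and the reduction of the pair of inequalities to a single $\max$ constraint — is routine once the dictionary above is fixed.
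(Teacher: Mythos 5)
Your proposal is correct and takes exactly the same route as the paper, whose entire proof is the one-line remark that the conclusion follows from Theorem \ref{sec1}. Your write-up simply supplies the dictionary (with $K=M$, $\Delta=[0,T]$, $\mathrm{P}=P_t$, $\bm{\gamma}=(\bm{c},\xi)$, $m=k+1$) and the convexity/feasibility checks that the paper leaves implicit, and these details are all accurate.
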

\begin{proof}
The conclusion is easily obtained by Theorem \ref{sec1}.
\end{proof}

Actually, the computed mathematical model $z(t)=w(\bm{c}^*,\bm{x}_{0,i},t)$ is a PAC model \cite{valiant2013,shalev2014} with accuracy level $\epsilon$ and confidence level $\beta$. The accuracy parameter $\epsilon$ in Theorem \ref{conclusion1} determines how far the learned model can be from the real one. This corresponds to the "approximately correct". A confidence parameter $\beta$ indicates how likely the learned model is to meet that accuracy requirement. This corresponds to the "probably" part. Under the data access model that we are investigating, these approximations are inevitable. Since the training set $\Big((\bm{x}_{0,i},t_j,y_{i,j})\Big)_{j=1}^M$ is randomly generated, there may always be a small chance that it will happen to be noninformative (for example, there is always some chance that the training set will contain only one domain point, sampled over and over again). Furthermore, even when we are lucky enough to get a training sample that does faithfully represent $[0,T]$, because it is just a finite sample, there may always be some finite details of $[0,T]$ that it fails to reflect. The accuracy parameter $\epsilon$ allows forgiving the learned model for making minor errors. 

\subsubsection*{{\rm \textbf{One Trajectory Verification}}}
\label{PACCOT}
Based on Theorem \ref{conclusion1}, we in this subsection solve Problem \ref{veri} for the system \eqref{bb} with one trajectory $y_{\bm{x}_0,i}(\cdot):[0,T]\rightarrow \mathbb{R}$ using the trajectory of the mathematical model $z(t)=w(\bm{c}^*,\bm{x}_{0,i},t)$ within the framework of PAC learning.

We first characterize the reachability of the trajectory $y_{\bm{x}_0,i}(\cdot):[0,T]\rightarrow \mathbb{R}$ using the mathematical model $z(t)=w(\bm{c}^*,\bm{x}_{0,i},t)$ plus the computed $\xi^*$. We denote the trajectory of the mathematical model $z(t)=w(\bm{c}^*,\bm{x}_{0,i},t)$ by $z_{\bm{x}_{0,i}}(\cdot): [0,T]\rightarrow \mathbb{R}$.  From Theorem \ref{conclusion1}, we have that with confidence of at least $1-\beta$, 
\begin{equation}
\label{pro1}
y_{\bm{x}_{0,i}}(t)\in [z_{\bm{x}_{0,i}}(t)-\xi^*, z_{\bm{x}_{0,i}}(t)+\xi^*]
\end{equation}
 for all $t$ in  $[0,T]$ but at most a fraction of probability measure $\epsilon$, i.e., with confidence of at least $1-\beta$, the amount of time for the trajectory $y_{\bm{x}_{0,i}}(\cdot):[0,T]\rightarrow \mathbb{R}$ staying within the $\xi^{*}$ neighborhood of the trajectory $z_{\bm{x}_{0,i}}(\cdot):[0,T] \rightarrow \mathbb{R}$ exceeds $T(1-\epsilon)$. A graph explanation is further presented in Fig. \ref{illustration2} to enhance the understanding of \eqref{pro1}. In Fig. \ref{illustration2}, $y_{\bm{x}_{0,i}}(t)\notin [z_{\bm{x}_{0,i}}(t)-\xi^*, z_{\bm{x}_{0,i}}(t)+\xi^*]$ for $t \in [t_1,t_2]\cup[t_3,t_4] \cup [t_5,t_6]$. According to Theorem \ref{conclusion1}, $t_{6}-t_5+t_{4}-t_3+t_{2}-t_1 \leq \epsilon T$ with confidence of at least $1-\beta$.

\begin{figure}
\centering
\includegraphics[width=3.05in,height=1.3in]{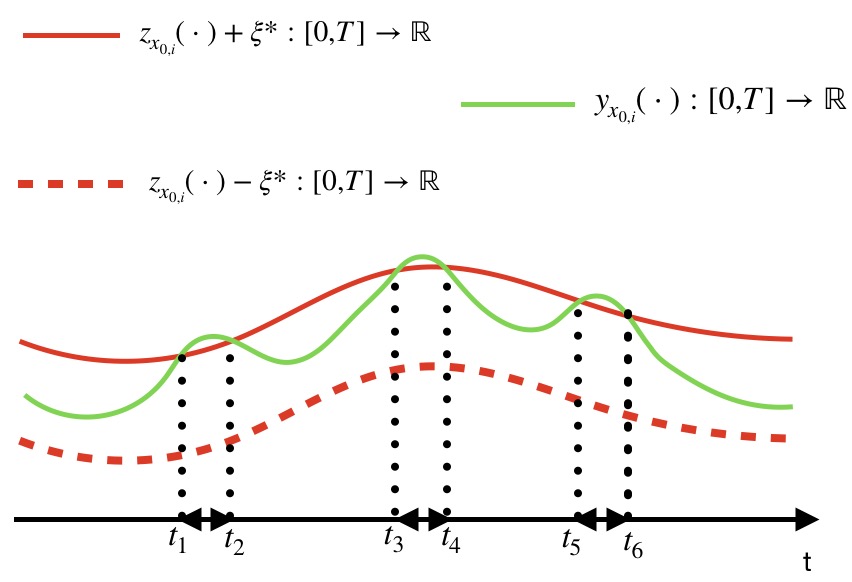} 
\caption{An illustration of the discrepancy between the mathematical model $z(t)=w(\bm{c}^*,\bm{x}_{0,i},t)$ and the system $y(t)=b(\bm{x}_{0,i},t)$ for $t\in [0,T]$.}
\label{illustration2}
\end{figure}

Then we solve Problem \ref{veri} based on the formal reachability characterization given above. That is, 

\textit{
if $[z_{\bm{x}_{0,i}}(t)-\xi^*, z_{\bm{x}_{0,i}}(t)+\xi^*]$ does not intersect the unsafe set $\mathtt{Uns}$ for $t\in [0,T]$, i. e.,
 $[z_{\bm{x}_{0,i}}(t)-\xi^*, z_{\bm{x}_{0,i}}(t)+\xi^*]\cap \mathtt{Uns}=\emptyset$ for $t\in [0,T]$, we have that the amount of time the system \eqref{bb} with the input $\bm{x}_{0,i}$ spends inside the unsafe set $\mathtt{Uns}$  does not exceed $\epsilon T$, with confidence of at least $1-\beta$.  
 }
 
 If $\beta$ in Theorem \ref{conclusion1} is extremely small (smaller than $10^{-20}$), then we have a priori practical certainty that the total amount of unsafe time does not exceed $\epsilon T$.  As explained in Subsection \ref{SO}, the confidence level $1-\beta$ can be made large without increasing the size $M$ of samples significantly. This framework is useful in those situations where the system \eqref{bb} is able to tolerate the exposure to a deteriorating agent for a limited amount of time.  For example, let us consider a solar-powered autonomous vehicle. Regions without solar exposure are considered to be unsafe, since the vehicle's battery could be drained after a period of time. However, it would be inefficient to plan a path for the vehicle completely avoiding  all these shaded regions. Instead, a more reasonable requirement would be that the amount of time the vehicle spends in the shaded regions is small.   
 
 \begin{remark}
 Our approach can also be used to characterize the case that there exists $t\in [0,T]$ such that $[z_{\bm{x}_{0,i}}(t)-\xi^*, z_{\bm{x}_{0,i}}(t)+\xi^*]\cap \mathtt{Uns}\neq \emptyset$. For this case, we need to compute a value $\tau\geq 0$, which is larger than or equal to the amount of time such that $[z_{\bm{x}_{0,i}}(t)-\xi^*, z_{\bm{x}_{0,i}}(t)+\xi^*]\cap \mathtt{Uns}\neq \emptyset$. Further, we have that the amount of time the system \eqref{bb} with the input $\bm{x}_{0,i}$ spends inside the unsafe set $\mathtt{Uns}$ does not exceed $\epsilon T+\tau$, with confidence of at least $1-\beta$. 
 \end{remark}

In the following we use an example from a Van-der-Pol oscillator to enhance the understanding of our approach.
\begin{example}
\label{van}
Consider a system with $T=10$, $\bm{x}_{0,i}=(1.4,2.3)^{\top}$ and $\mathtt{Uns}=\{y\in \mathbb{R}\mid y\geq 3\}$, whose internal dynamics are described by an ordinary differential equation which generally describes a Van-der-Pol oscillator \cite{van1926}:
\begin{equation}
\label{ode1}
    \begin{cases}
    \frac{d x_1}{d t}=x_2\\
    \frac{d x_2}{d t}=(1-x_1^2)x_2-x_1
    \end{cases}.
\end{equation}

We assume that the trajectory of the system \eqref{bb} in this example describes the time evolution of the state $x_1$ in \eqref{ode1}, i.e., $y(t)=b(\bm{x}_{0,i},t)=x_1(t)$ for $t\in [0,10]$. The ground truth trajectory $y_{\bm{x}_{0,i}}(\cdot):[0,T]\rightarrow \mathbb{R}$, is illustrated in Fig. \ref{fig}.  It is used to extract datum $\Big((\bm{x}_{0,i},t_j,y_{i,j})\Big)_{j=1}^M$ and perform comparisons.  The method of constructing the ground truth trajectory is introduced in the beginning of Section \ref{Pre}. 

Let $\beta=10^{-20}$ and $\epsilon=0.01$. In this example we use $M=10811$ and a polynomial $w(\bm{c},\bm{x}_{0,i},t)$ of degree $6$ over $t$ as a mathematical model to perform computations. Since $\bm{x}_{0,i}$ is known, $w(\bm{c},\bm{x}_{0,i},t)$ is of the form $\sum_{i=0}^6 c_i t^i$.  Note that the number $k+1$ of decision variables in \eqref{lp} is $8$ and consequently $M\geq 10811$ according to Theorem \ref{conclusion1}. 

 We obtain $\xi^*=0.33$ via solving the linear program \eqref{lp} with $U_c=U_{\xi}=100$. Therefore, we have that with confidence of at least $1-10^{-20}$,
\begin{equation}
    \label{11}
y_{\bm{x}_{0,i}}(t)\in [z_{\bm{x}_{0,i}}(t)-0.33, z_{\bm{x}_{0,i}}(t)+0.33]
\end{equation}
for all $t$ in  $[0,10]$ except at most a fraction of probability measure $0.01$, where $z_{\bm{x}_{0,i}}(\cdot):[0,T] \rightarrow \mathbb{R}$ is the trajectory of the mathematical model $z(t)=w(\bm{c}^*,\bm{x}_{0,i},t)$. We also take the time step $\Delta t=10^{-5}$ and the corresponding states $\big(y_{\bm{x}_{0,i}}(j\Delta t)\big)_{j=0}^{10^6}$ on the ground truth trajectory  to verify the satisfiability of  \eqref{11}, i.e., whether 
$y_{\bm{x}_{i,0}}(j\Delta t)\in [\bm{z}_{\bm{x}_{0,i}}(j\Delta t)-0.33, \bm{z}_{\bm{x}_{0,i}}(j\Delta t)+0.33]$
holds for $j\in \{0,1,\ldots, 10^{6}\}$. The satisfiability ratio is $100\%$.

Since $[z_{\bm{x}_{0,i}}(t)-0.33, z_{\bm{x}_{0,i}}(t)+0.33] \cap \mathtt{Uns}=\emptyset$ for $t\in [0,10]$, we have that the amount of time the system \eqref{bb} with the input $(1.4,2.3)^{\top}$ spends inside the unsafe set $\mathtt{Uns}$ does not exceed $0.1$, with confidence of at least $1-10^{-20}$.  

\oomit{Similarly, we apply the above procedure to the characterization of the trajectory $y_{\bm{x}_{0,i}}(\cdot): [0,10]\rightarrow \mathbb{R}$ when it describes the time evolution of the state $x_2$ in the system \eqref{ode1}, where $\beta=10^{-20}$, $\epsilon=0.01$ and $M=10811$ and a polynomial $w'(\bm{c},\bm{x}_{0,i},t)$ of degree $6$ is used. We obtain $\xi^*=0.85$ via solving the linear program \eqref{lp} with $U_c=U_{\xi}=100$. Thus, we have that with confidence of at least $1-10^{-20}$,
\begin{equation}
\label{111}
y_{\bm{x}_{0,i}}(t)\in [\bm{z}_{\bm{x}_{0,i}}(t)-0.85, \bm{z}_{\bm{x}_{0,i}}(t)+0.85]
\end{equation}
for all $t$ in  $[0,10]$ except at most a fraction $0.01$. 

Also, within the Monte-Carlo testing framework, we take the time step $\Delta t=10^{-5}$ and the corresponding states $\big(y_{\bm{x}_{0,i}}(j\Delta t)\big)_{j=0}^{10^6}$ on the ground truth trajectory to verify the satisfiability of  \eqref{11}, i.e., whether 
\[y_{\bm{x}_{0,i}}(j\Delta t)\in [\bm{z}_{\bm{x}_{0,i}}(j\Delta t)-0.85, \bm{z}_{\bm{x}_{0,i}}(j\Delta t)+0.85]\]
holds for $j\in \{0,1,\ldots, 10^{6}\}$. The satisfiability ratio of is $100\%$ as well.}

\begin{figure}
\center
   \includegraphics[width=3.05in,height=1.4in]{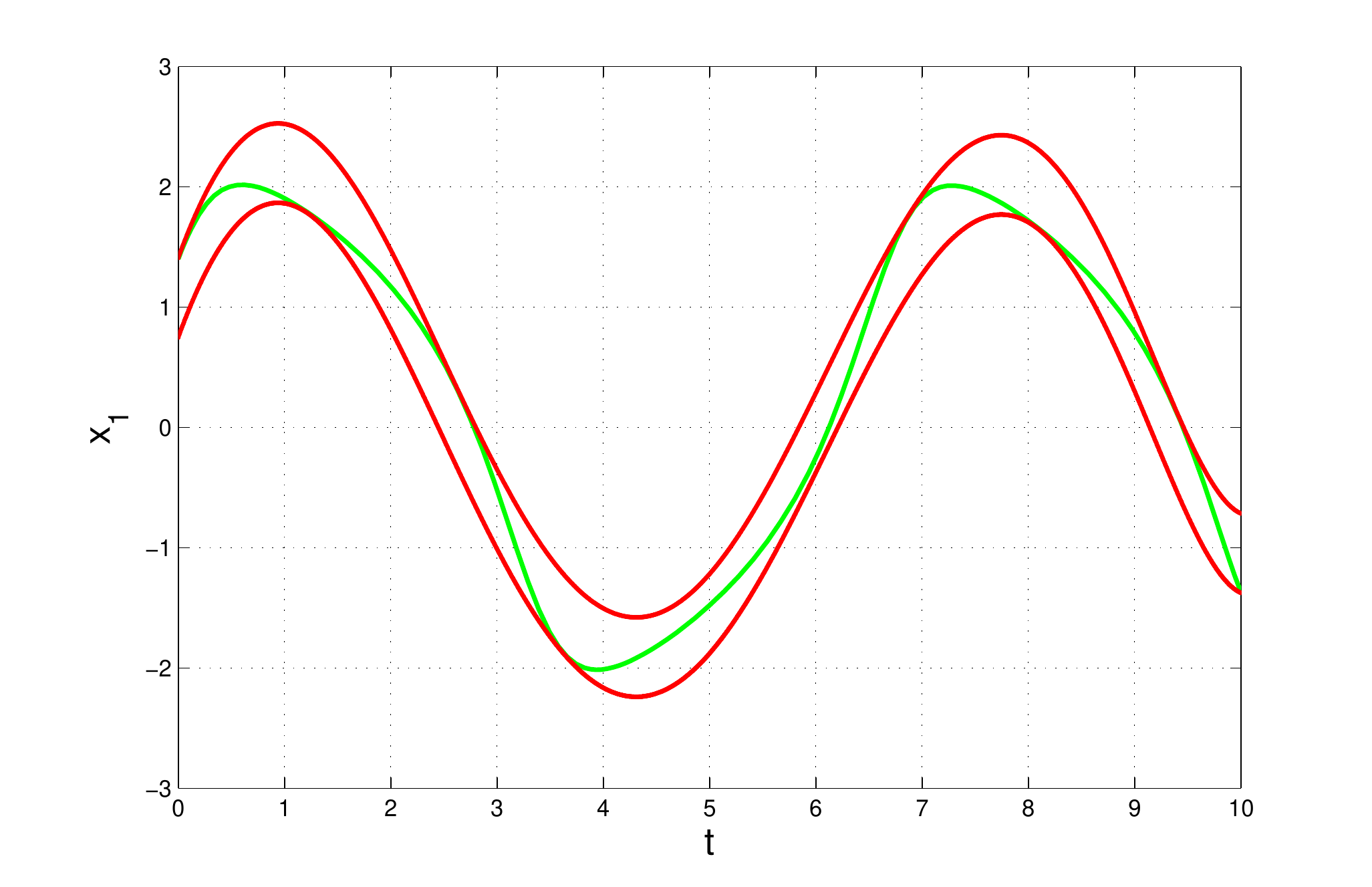}
     \caption{An illustration of the trajectory reachability for Example \ref{van}. The green curve denotes the ground truth trajectory. The red curve denotes $z_{\bm{x}_{0,i}}(\cdot)+\xi^*: [0,10]\rightarrow \mathbb{R}$ and $z_{\bm{x}_{0,i}}(\cdot)-\xi^*: [0,10]\rightarrow \mathbb{R}$ respectively.}
 \label{fig}
\end{figure}
\end{example} 
\subsubsection{{\rm Multiple Trajectories Verification}}
\label{MTC}
In Subsection 3.2.1 we considered one trajectory characterization of the system \eqref{bb}. In this subsection we extend the method in Subsection 3.2.1 to multiple trajectories characterization. These trajectories are the ones of the system \eqref{bb} with inputs $\bm{x}_{0,1},\ldots,\bm{x}_{0,N}$.

This extension is straightforward. We just  need to enrich the constraints in \eqref{lp} by incorporating these discrete-time trajectories $\Big((\bm{x}_{0,1},t_j,\bm{y}_{1,j})\Big)_{j=1}^M$, $\ldots$, $\Big((\bm{x}_{0,N},t_j,\bm{y}_{N,j})\Big)_{j=1}^M$, consequently resulting in the following linear program:
 \begin{equation}
 \label{lp1}
\begin{split}
&\min_{\bm{c},\xi} \xi\\
&\text{s.~t.~for each~} j=1,\ldots, M\text{~and~}i=1,\ldots,N:\\
& w(\bm{c}, \bm{x}_{0,i},t_j)-y_{i,j}\leq \xi,\\
& y_{i,j}-w(\bm{c}, \bm{x}_{0,i},t_j)\leq \xi,\\
&-U_c\leq c_{l}\leq U_c, l=1,\ldots,k,\\
& 0\leq \xi \leq U_{\xi},
\end{split}
 \end{equation}
where $U_c\in \mathbb{R}_{\geq 0}$ is a given upper bound for $c_{l}$, $l=1,\ldots,k$, and $U_{\xi}\in \mathbb{R}_{\geq 0}$ is a given upper bound for $\xi$. Denote the optimal solution to \eqref{lp1} by $(\bm{c}^{**},\xi^{**})$. 

 We denote the trajectory of the mathematical model $z(t)=w(\bm{c}^*,\bm{x}_{0},t)$ with the input $\bm{x}_0$ by $z_{\bm{x}_{0}}(\cdot): [0,T]\rightarrow \mathbb{R}$.  Similarly, we have the following theorem for the solution obtained via solving the linear program \eqref{lp1}.
 \begin{theorem}
\label{conclusion2}
Let $(\bm{c}^{**},\xi^{**})$ be an optimal solution to \eqref{lp1}, $\epsilon\in (0,1)$, $\beta\in (0,1)$ and 
\begin{equation}
\label{NLP1}
\epsilon\geq \frac{2}{M}(\ln \frac{1}{\beta}+k+1). 
\end{equation} 
Then for each input $\bm{x}_{0,i}$,  $i=1,\ldots,N$, we have that with at least $1-\beta$ confidence,  \[P_t(\{t\in [0,T]\mid |w(\bm{c}^{**}, \bm{x}_{0,i},t)-b(\bm{x}_{0,i},t)| \leq \xi^{**}\})\geq 1-\epsilon.\]
\end{theorem}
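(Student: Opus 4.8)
The plan is to cast the linear program \eqref{lp1} as a single scenario program in the sense of Definition \ref{sop} and then invoke Theorem \ref{sec1}, exactly as was done for the one-trajectory case in Theorem \ref{conclusion1}. The decisive modelling choice is to treat the random sample as the time instance alone: the $N$ inputs $\bm{x}_{0,1},\ldots,\bm{x}_{0,N}$ are fixed parameters, not random draws, so each $t_j$ contributes a single scenario even though it generates $N$ pairs of inequalities. Accordingly I would set the optimization variable to $\bm{\gamma}=(\bm{c},\xi)$ with $m=k+1$, let the deterministic box constraints $-U_c\le c_l\le U_c$ and $0\le\xi\le U_\xi$ define the fixed convex closed set $\Gamma$, take the sample to be $\bm{\delta}=t$ drawn from $[0,T]$ under $P_t$, and set the number of samples to $K=M$.

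First I would collapse the $N$ constraints attached to a fixed time $t$ into one convex constraint. Define
\[
\bm{f}_t(\bm{c},\xi)=\max_{i=1,\ldots,N}\big|w(\bm{c},\bm{x}_{0,i},t)-b(\bm{x}_{0,i},t)\big|-\xi .
\]
Because $w(\bm{c},\bm{x}_{0,i},t)$ is linear in $\bm{c}$, each term $|w(\bm{c},\bm{x}_{0,i},t)-b(\bm{x}_{0,i},t)|$ is convex in $(\bm{c},\xi)$, and the pointwise maximum of finitely many convex functions is convex; hence $\bm{f}_t$ is continuous and convex in $\bm{\gamma}$ for every $t$, which is precisely the hypothesis required by the scenario framework. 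Using $y_{i,j}=b(\bm{x}_{0,i},t_j)$, one checks that $\bm{f}_{t_j}(\bm{c},\xi)\le 0$ holds exactly when the $2N$ inequalities of \eqref{lp1} indexed by $t_j$ hold, so the scenario program built from $(t_j)_{j=1}^M$ with constraint $\wedge_{j=1}^M \bm{f}_{t_j}(\bm{c},\xi)\le 0$ coincides with \eqref{lp1}, and its optimal solution is $(\bm{c}^{**},\xi^{**})$.

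Next I would apply Theorem \ref{sec1} with $K=M$ and $m=k+1$. The hypothesis \eqref{NLP1} is exactly \eqref{N} under this identification, so the theorem yields, with confidence at least $1-\beta$, that $P_t(\{t\in[0,T]\mid \bm{f}_t(\bm{c}^{**},\xi^{**})>0\})\le\epsilon$; equivalently, with confidence at least $1-\beta$,
\[
P_t\Big(\big\{t\in[0,T]\mid \max_{i=1,\ldots,N}|w(\bm{c}^{**},\bm{x}_{0,i},t)-b(\bm{x}_{0,i},t)|\le \xi^{**}\big\}\Big)\ge 1-\epsilon .
\]
Finally I would deduce the per-input statement from this joint statement by set inclusion: for each fixed $i$ the event in which the maximum over all inputs is bounded by $\xi^{**}$ is contained in the event $\{t:|w(\bm{c}^{**},\bm{x}_{0,i},t)-b(\bm{x}_{0,i},t)|\le\xi^{**}\}$, so monotonicity of $P_t$ transfers the lower bound $1-\epsilon$ to every single input, giving the claim.

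The step I expect to require the most care is the reformulation in the second paragraph, namely recognizing that the correct scenario count is $M$ rather than $MN$. The bound \eqref{NLP1} depends on $M$, not $MN$, and this is justified only because the $N$ constraints sharing a common $t_j$ are not independent draws but a single scenario, bundled together by the maximum. Establishing the convexity of $\bm{f}_t$ and the equivalence of this bundled constraint with the original $2N$ inequalities is what makes the short appeal to Theorem \ref{sec1} legitimate and prevents an erroneous $N$-fold inflation of the sample complexity.
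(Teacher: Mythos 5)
Your proposal is correct and follows essentially the same route as the paper: the paper likewise treats each sampled time instance as a single scenario carrying all $N$ input constraints (so $K=M$, $m=k+1$ in Theorem \ref{sec1}), obtains the conjunction-over-inputs event with probability at least $1-\epsilon$ at confidence $1-\beta$, and then passes to each individual input by monotonicity of $P_t$ under set inclusion. Your write-up is simply more explicit than the paper's (which asserts the first step in one line), in particular in verifying convexity of the bundled max-constraint and in flagging that the correct scenario count is $M$ rather than $MN$.
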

\begin{proof}
According to the scenario optimization in Subsection \ref{SO}, we have that with at least $1-\beta$ confidence,  \[
\begin{split}
    &P_t(\{t\in [0,T]\mid \wedge_{i=1}^N |w(\bm{c}^{**}, \bm{x}_{0,i},t)-b(\bm{x}_{0,i},t)| \leq \xi^{**}\})\\
    &\geq 1-\epsilon.
\end{split}\] Since 
\[\begin{split}
&P_t(\{t\in [0,T]\mid |w(\bm{c}^{**}, \bm{x}_{0,i},t)-b(\bm{x}_{0,i},t)| \leq \xi^{**}\})\geq \\
&P_t(\{t\in [0,T]\mid \wedge_{i=1}^M|w(\bm{c}^{**}, \bm{x}_{0,i},t)-b(\bm{x}_{0,i},t)| \leq \xi^{**}\})
\end{split}\]
for $i\in \{1,\cdots, M\}$, the conclusion follows directly.
\end{proof}

From Theorem \ref{conclusion2}, we have that for each trajectory $y_{\bm{x}_{0,i}}(\cdot):[0,T]\rightarrow \mathbb{R}$ of the system \eqref{bb} with the input $\bm{x}_{0,i}$,  $i=1,\ldots, N$, with confidence of at least $1-\beta$, \[y_{\bm{x}_{0,i}}(t)\in [z_{\bm{x}_{0,i}}(t)-\xi^{**}, z_{\bm{x}_{0,i}}(t)+\xi^{**}]\] for all $t$ in  $[0,T]$ but at most a fraction of probability measure $\epsilon$, i.e., with confidence of at least $1-\beta$, each of the $N$ trajectories of the system \eqref{bb}  deviates from the corresponding one of the mathematical model $\bm{z}(t)=w(\bm{c}^{**},\bm{x}_0,t)$ by at most $\xi^{**}$ for all $t\in [0,T]$ but at most a fraction $\epsilon$.

Consequently, the solution to Problem \ref{veri} for the system \eqref{bb} with multiple trajectories is presented below:

\textit{
If $[z_{\bm{x}_{0,i}}(t)-\xi^{**}, z_{\bm{x}_{0,i}}(t)+\xi^{**}]$ does not intersect the unsafe set $\mathtt{Uns}$ for $t\in [0,T]$, $i\in \{1,\ldots,N\}$, we have that the amount of time the system \eqref{bb} with the input $\bm{x}_{0,i}$ spends inside the unsafe set $\mathtt{Uns}$ does not exceed $\epsilon T$, with confidence of at least $1-\beta$.  
 }

It is worth remarking that the family of inputs $(\bm{x}_{0,i})_{i=1}^N$ here does not require to be extracted independently according to the probability distribution $P_{\bm{x}_0}$. They can be arbitrary $N$ inputs of interest in the set $\mathcal{X}_0$.  

\begin{example}
\label{van1}
Let's take the system in Example \ref{van} as an instance to illustrate the case of two trajectories verification. These two trajectories, which are presented in Fig. \ref{fig3},  respectively describe the time evolution of the state $x_1$ in \eqref{ode1} with two different inputs $\bm{x}_{0,1}=(1.25,2.28)^{\top}$ and $\bm{x}_{0,2}=(1.55,2.32)^{\top}$.

Let $\beta=10^{-20}$ and $\epsilon=0.01$. In this example we use $M=26211$ and a polynomial $w(\bm{c},\bm{x}_0,t)$ of degree $6$ as a mathematical model, which is input-dependent and is linear in $\bm{c}$, to perform computations. The number $k+1$ of decision variables in \eqref{lp} is $85$ and thus $M\geq 26211$ from Theorem \ref{conclusion1}.

We obtain $\xi^{**}=0.34$ via solving the linear program \eqref{lp1} with $U_c=U_{\xi}=100$. Thus,  for each $i=1,2$, we have that with confidence of at least $1-10^{-20}$, $y_{\bm{x}_{0,i}}(t)\in [z_{\bm{x}_{0,i}}(t)-0.34, z_{\bm{x}_{0,i}}(t)+0.34]$ for all $t\in [0,10]$ except a small fraction $0.01$, where $z_{\bm{x}_{0,i}}(\cdot):[0,T] \rightarrow \mathbb{R}$ is the trajectory of the mathematical model $z(t)=w(\bm{c}^{**},\bm{x}_{0,i},t)$. Like Example \ref{van}, within the Monte-Carlo testing framework, we take the time step $\Delta t= 10^{-5}$ and the corresponding states $\big(y_{\bm{x}_{0,i}}(j\Delta t)\big)_{j=0}^{10^6}$ on the  ground truth trajectory with the input $\bm{x}_{0,i}$ to verify whether $y_{\bm{x}_{0,i}}(j\Delta t)\in [z_{\bm{x}_{0,i}}(j\Delta t)-0.34, z_{\bm{x}_{0,i}}(j\Delta t)+0.34]$ for $j\in \{0,1,\ldots,10^6\}$, where $i=1,2$. The satisfiability ratio is $100\%$ for both of these two trajectories.

Since $[z_{\bm{x}_{0,i}}(t)-0.34, z_{\bm{x}_{0,i}}(t)+0.34] \cap \mathtt{Uns}=\emptyset$ for $t\in [0,10]$ and $i=1,2$, we have that the amount of time the system \eqref{bb} with each of the two inputs $\bm{x}_{0,1}=(1.25,2.28)^{\top}$ and $\bm{x}_{0,2}=(1.55,2.32)^{\top}$ spends inside the unsafe set $\mathtt{Uns}$ does not exceed $0.1$, with confidence of at least $1-10^{-20}$.

\begin{figure}
\center
   \includegraphics[width=3.05in,height=1.1in]{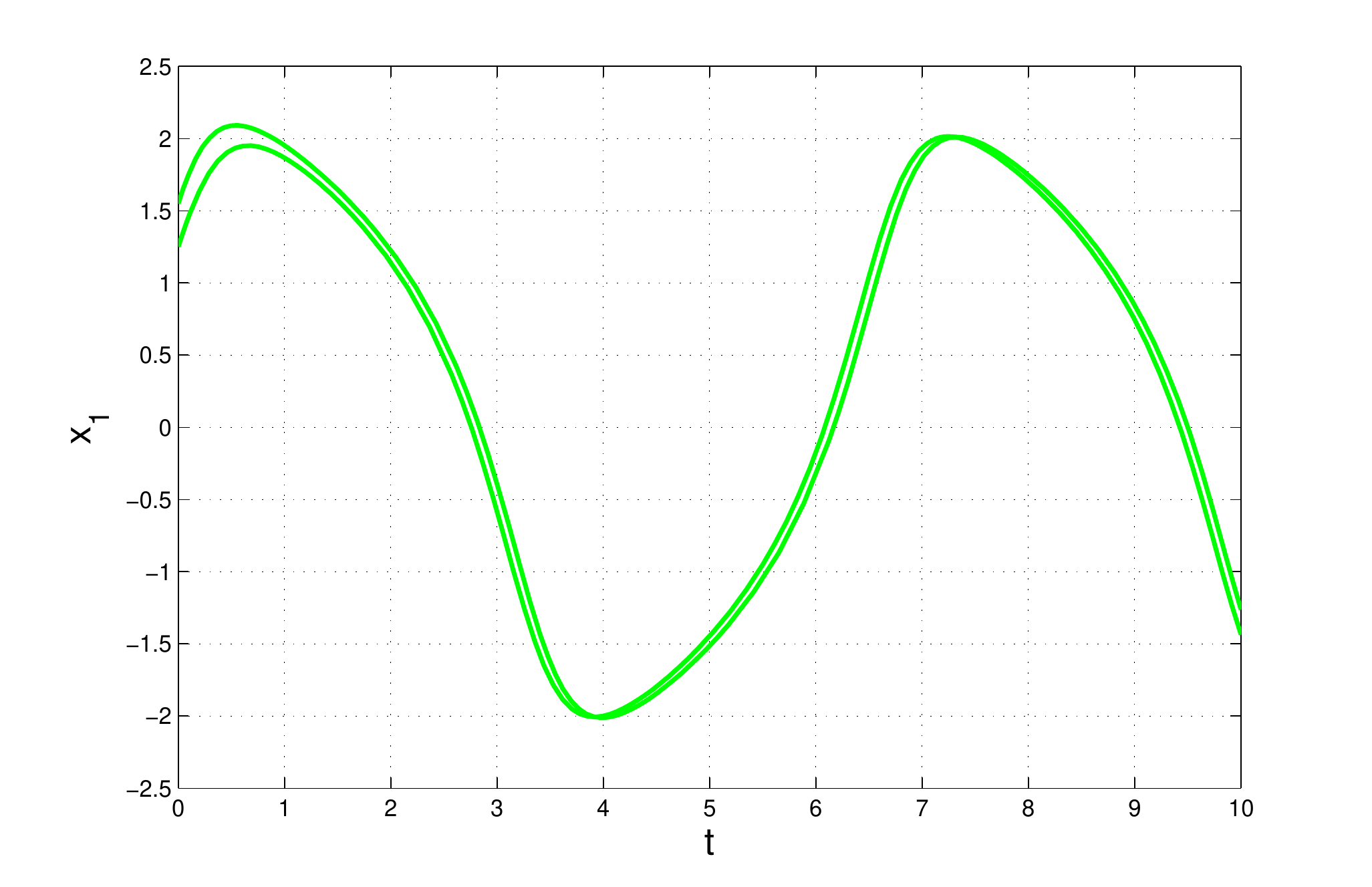}
   \includegraphics[width=3.05in,height=1.1in]{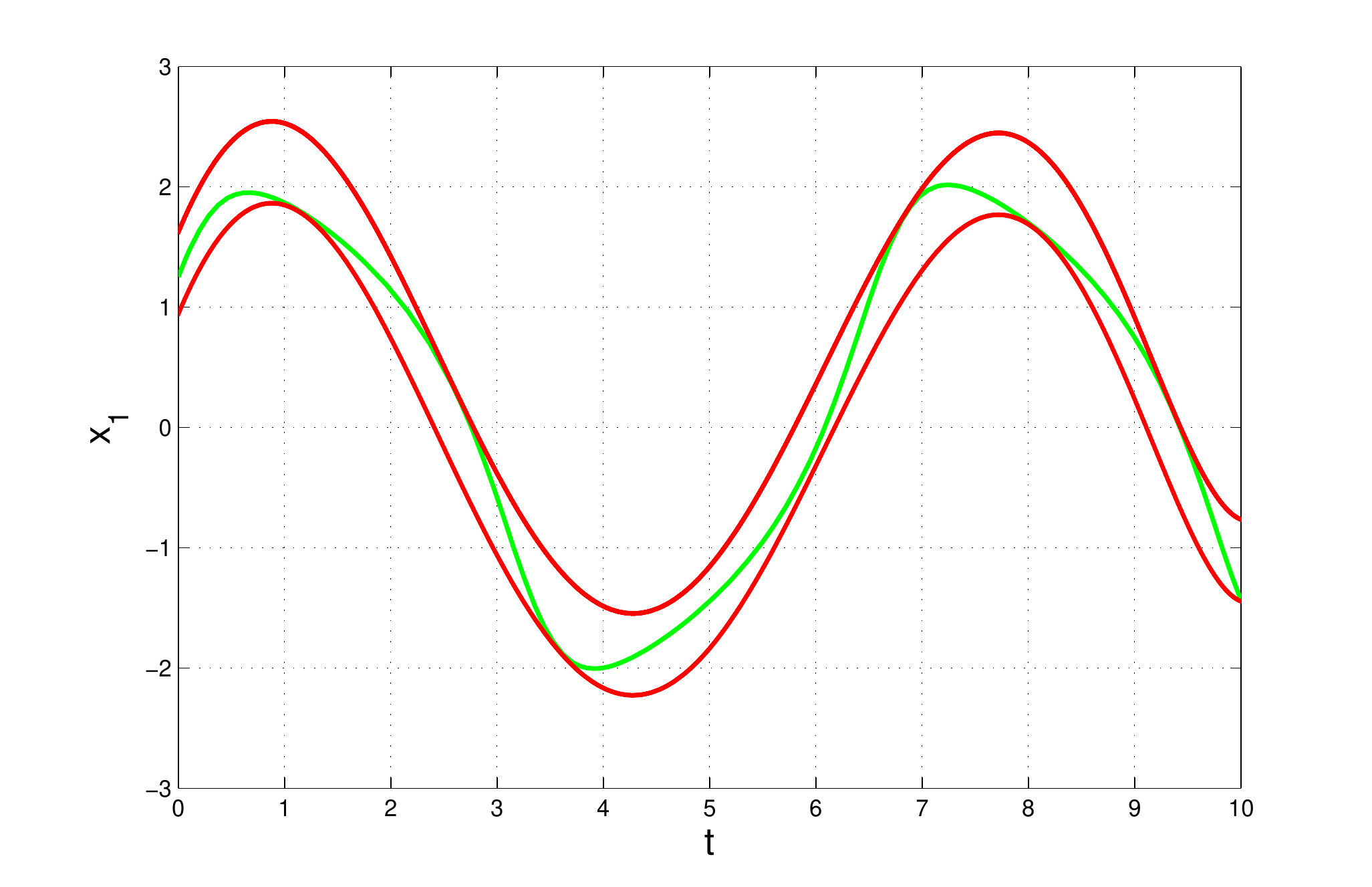}
   \includegraphics[width=3.05in,height=1.1in]{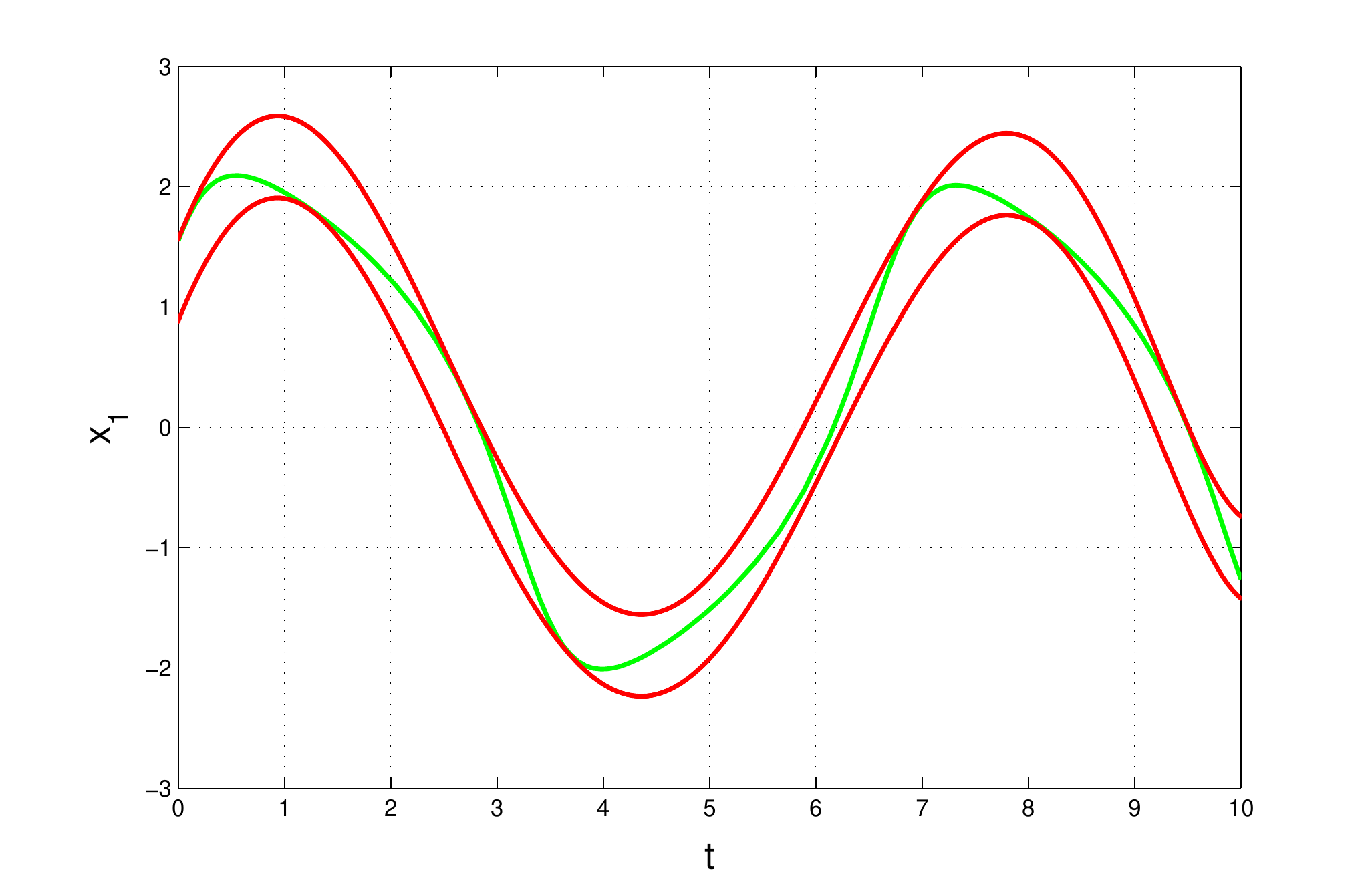}
     \caption{An illustration of two trajectories reachability for Example \ref{van1} with inputs $\bm{x}_{0,1}=(1.25,2.28)^{\top}$ and $\bm{x}_{0,2}=(1.55,2.32)^{\top}$. The green curves denote the two ground truth trajectories. From middle to right ($i=1,2$): the green curve denotes  $y_{\bm{x}_{0,i}}(\cdot):[0,T]\rightarrow \mathbb{R}$, and the red curves correspond to $z_{\bm{x}_{0,i}}(\cdot)+\xi^{**}: [0,T]\rightarrow \mathbb{R}$ and $z_{\bm{x}_{0,i}}(\cdot)-\xi^{**}: [0,T]\rightarrow \mathbb{R}$ respectively.}
     \label{fig3}
\end{figure}
\end{example}

\subsubsection{{\rm All Trajectories Verification}}
In this subsection we further extend the method in Subsection 3.2.2 for multiple trajectories verification to all trajectories verification of the system \eqref{bb} with the input set $\mathcal{X}_0$. Unlike in Subsection 3.2.2, the family of inputs $(\bm{x}_i)_{i=1}^N$ in this situation should be extracted independently according to the probability distribution $P_{\bm{x}_0}$.

 \begin{theorem}
\label{conclusion3}
Let $(\bm{c}^{**},\xi^{**})$ be an optimal solution to \eqref{lp1}, $\epsilon_1\in (0,1)$, $\beta_1\in (0,1)$, $\epsilon_2\in (0,1)$, $\beta_2\in (0,1)$,  and 
\begin{align}
&\epsilon_1\geq \frac{2}{M}(\ln \frac{1}{\beta_1}+k+1), \label{NLP21}\\
&\epsilon_2\geq \frac{2}{N}(\ln \frac{1}{\beta_2}+k+1). \label{NLP22}
\end{align}

Then we have that with at least $1-\beta_2$ confidence,  
$P_{\bm{x}_0}(\{\bm{x}_0 \mid \bm{x}_0\in \mathcal{X}\}) \geq 1-\epsilon_2$, where $\mathcal{X}=$
\[\begin{split}
&\left\{\bm{x}_0\in \mathcal{X}_0\middle|\;
\begin{aligned}
&P_t\Bigg(\left\{t\in [0,T]\middle|\;
\begin{aligned}
&|w(\bm{c}^{**}, \bm{x}_{0},t)-b(\bm{x}_{0},t)|\\
&\leq \xi^{**}
\end{aligned}
\right\}\Bigg)\\
&\geq 1-\epsilon_1, \text{with confidence of at least~}1-\beta_1.
\end{aligned}
\right\}.
\end{split}
\]
\end{theorem}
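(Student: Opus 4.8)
The plan is to obtain the conclusion through two nested applications of the scenario optimization result, Theorem \ref{sec1}: an \emph{inner} application in which the uncertainty is the time $t\in[0,T]$ (controlling $\epsilon_1,\beta_1$), and an \emph{outer} application in which the uncertainty is the input $\bm{x}_0\in\mathcal{X}_0$ (controlling $\epsilon_2,\beta_2$). The linear program \eqref{lp1} is the common object on which both applications are built, read in two different ways, with decision variable $(\bm{c},\xi)\in\mathbb{R}^{k+1}$ in each case.

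First I would fix the $N$ sampled inputs and regard \eqref{lp1} as a scenario program whose random constraint, indexed by $t$, is the conjunction $\wedge_{i=1}^{N}|w(\bm{c},\bm{x}_{0,i},t)-b(\bm{x}_{0,i},t)|\le\xi$. This is exactly the situation of Theorem \ref{conclusion2}: since $(\bm{c}^{**},\xi^{**})$ is its optimum and \eqref{NLP21} holds, Theorem \ref{sec1} gives, with confidence at least $1-\beta_1$ over the draw of $(t_j)_{j=1}^M$, that each sampled input $\bm{x}_{0,i}$ satisfies the inner property $P_t(|w(\bm{c}^{**},\bm{x}_{0,i},t)-b(\bm{x}_{0,i},t)|\le\xi^{**})\ge 1-\epsilon_1$; that is, with inner confidence $1-\beta_1$ every sampled input lies in $\mathcal{X}$.

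Next I would condition on the (now fixed) time samples and reread \eqref{lp1} as a scenario program whose uncertainty is the input $\bm{x}_0$ and whose convex random constraint for input $\bm{x}_0$ is $\max_{j}\big(|w(\bm{c},\bm{x}_0,t_j)-b(\bm{x}_0,t_j)|-\xi\big)\le 0$. Because the $N$ inputs are i.i.d.\ from $P_{\bm{x}_0}$ and independent of the time samples, and $(\bm{c}^{**},\xi^{**})$ is the optimum, Theorem \ref{sec1} with \eqref{NLP22} yields, with confidence at least $1-\beta_2$ over the draw of $(\bm{x}_{0,i})_{i=1}^N$, that the $P_{\bm{x}_0}$-measure of inputs for which $(\bm{c}^{**},\xi^{**})$ meets all $M$ time-sampled constraints is at least $1-\epsilon_2$. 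It remains to identify this set of inputs with $\mathcal{X}$: for any input satisfying the $M$ time-sampled constraints, the inner application of Theorem \ref{sec1} (with the same bound \eqref{NLP21}) certifies, with confidence $1-\beta_1$, the inner property defining membership in $\mathcal{X}$, whence $P_{\bm{x}_0}(\mathcal{X})\ge 1-\epsilon_2$ with outer confidence $1-\beta_2$.

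The main obstacle is the clean composition of the two probabilistic layers, and in particular the transfer of the inner PAC guarantee from the $N$ \emph{sampled} inputs to the \emph{generic} inputs counted by the outer measure. For a sampled input the solution $(\bm{c}^{**},\xi^{**})$ is genuinely optimal for the inner scenario program, so Theorem \ref{sec1} applies verbatim; for an arbitrary input that merely satisfies the time-sampled constraints, $(\bm{c}^{**},\xi^{**})$ is feasible but not necessarily optimal, so one must argue that the inner $(\epsilon_1,\beta_1)$-guarantee still attaches to it, and must keep the conditioning on the two independent sample families straight so that $\beta_1$ and $\beta_2$ compose rather than accumulate. I expect everything else---both invocations of Theorem \ref{sec1} and the monotonicity passing from the conjunction over inputs to each individual input---to be routine.
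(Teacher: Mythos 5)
Your overall architecture coincides with the paper's: an outer scenario application over the inputs, showing that the set $\tilde{\mathcal{X}}_0$ of inputs whose $M$ time-sampled constraints are satisfied by $(\bm{c}^{**},\xi^{**})$ has $P_{\bm{x}_0}$-measure at least $1-\epsilon_2$ with confidence $1-\beta_2$, followed by an inner argument showing $\tilde{\mathcal{X}}_0\subseteq\mathcal{X}$. But the step you yourself flag as the ``main obstacle'' --- attaching the inner $(\epsilon_1,\beta_1)$ guarantee to a \emph{generic} $\bm{x}_0\in\tilde{\mathcal{X}}_0$ rather than to the $N$ sampled inputs --- is exactly the step you never carry out, and it does not follow from either of your two invocations of Theorem \ref{sec1}. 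Your inner application takes the random constraint indexed by $t$ to be the conjunction $\wedge_{i=1}^N|w(\bm{c},\bm{x}_{0,i},t)-b(\bm{x}_{0,i},t)|\le\xi$; that constraint function does not mention an unsampled $\bm{x}_0$ at all, so no monotonicity argument can extract from it a statement about such an input. And, as you correctly note, $(\bm{c}^{**},\xi^{**})$ is merely feasible, not necessarily optimal, for the scenario program built from the constraints of $\bm{x}_0$ alone, so Theorem \ref{sec1} cannot be invoked for it directly. Writing ``one must argue that the guarantee still attaches'' and deferring that argument leaves the containment $\tilde{\mathcal{X}}_0\subseteq\mathcal{X}$ unproved, and that containment is the heart of the theorem.

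The paper closes precisely this hole with a short but essential optimality-preservation trick: for a fixed $\bm{x}_0\in\tilde{\mathcal{X}}_0$, augment \eqref{lp1} with the $M$ additional constraints $|w(\bm{c},\bm{x}_0,t_j)-b(\bm{x}_0,t_j)|\le\xi$, obtaining the program \eqref{lp4}. By definition of $\tilde{\mathcal{X}}_0$ these extra constraints are already satisfied by $(\bm{c}^{**},\xi^{**})$; since the feasible set of \eqref{lp4} shrinks relative to \eqref{lp1} yet still contains $(\bm{c}^{**},\xi^{**})$, that point remains \emph{optimal} for \eqref{lp4}. Now \eqref{lp4} is itself a scenario program over the i.i.d.\ time samples, with the same $k+1$ decision variables, in which $\bm{x}_0$ simply plays the role of an $(N{+}1)$-st input; Theorem \ref{conclusion2} therefore applies to it under the same bound \eqref{NLP21}, and monotonicity of the conjunction yields $P_t(\{t\in[0,T]\mid |w(\bm{c}^{**},\bm{x}_0,t)-b(\bm{x}_0,t)|\le\xi^{**}\})\ge 1-\epsilon_1$ with confidence at least $1-\beta_1$, i.e.\ $\bm{x}_0\in\mathcal{X}$. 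This augmentation argument is the missing idea in your proposal; with it supplied, your outline becomes the paper's proof.
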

\begin{proof}
Let us fix the time instances $t_1,\cdots,t_M$ firstly, we have that with confidence of at least $1-\beta_2$,  \[
\begin{split}
&P_{\bm{x}_0}\Bigg(\left\{\bm{x}_0\in \mathcal{X}_0\middle|\; 
\bigwedge_{j=1}^M |w(\bm{c}^{**},\bm{x}_0,t_j)-b(\bm{x}_0,t_j)|\leq \xi^{**}\right\}\Bigg)\\
&\geq 1-\epsilon_2
\end{split}
\]

Let $\tilde{\mathcal{X}}_0=\{\bm{x}_0\in \mathcal{X}_0\mid  \wedge_{j=1}^M |w(\bm{c}^{**},\bm{x}_0,t_j)-b(\bm{x}_0,t_j)|\leq \xi^{**}\}$. Obviously, $\bm{x}_{0,i} \in \tilde{\mathcal{X}}_0$, $i=1,\ldots,N$.  For $\bm{x}_0\in \tilde{\mathcal{X}}_0$, we can add the constraints involving $\bm{x}_0$ to the linear program \eqref{lp1} and obtain the following linear program:
\begin{equation}
 \label{lp4}
\begin{split}
&\min_{\bm{c},\xi} \xi\\
&\text{s.~t.~for each~} j=1,\ldots, M\text{~and~}i=1,\ldots,N:\\
& w(\bm{c}, \bm{x}_{0,i},t_j)-y_{i,j}\leq \xi,\\
& y_{i,j}-w(\bm{c}, \bm{x}_{0,i},t_j)\leq \xi,\\
& w(\bm{c}, \bm{x}_{0},t_j)-b(\bm{x}_0,t_j)\leq \xi,\\
& b(\bm{x}_0,t_j)-w(\bm{c}, \bm{x}_{0},t_j)\leq \xi,\\
&-U_c\leq c_{l}\leq U_c, l=1,\ldots,k,\\
& 0\leq \xi \leq U_{\xi}.
\end{split}
\end{equation}
Obviously, $(\bm{c}^{**},\xi^{**})$ is also an optimal solution to \eqref{lp4}. Since the time instances $t_1,\cdots,t_M$ are also extracted  independently according to the distribution $P_t$, Theorem \ref{conclusion2} indicates that with confidence of at least $1-\beta_1$, 
\[P_t(\{t\in [0,T]\mid |w(\bm{c}^{**}, \bm{x}_{0},t)-b(\bm{x}_{0},t)| \leq \xi^{**}\})\geq 1-\epsilon_1\]
for $\bm{x}_0\in \tilde{\mathcal{X}}_0$. Thus, we have $\tilde{\mathcal{X}}_0\subseteq \mathcal{X}$ and consequently the conclusion follows.
\end{proof}

From Theorem \ref{conclusion3}, we have that with confidence of at least $1-\beta_2$, the probability measure of the set $\mathcal{X}$ is larger than $1-\epsilon_2$.  The set $\mathcal{X}$ is a set of inputs such that the trajectory of the system \eqref{bb} with each of them does not deviate from the corresponding one of the model $z(t)=w(\bm{c}^{**},\cdot,\cdot): \mathbb{R}^n\times [0,T]\rightarrow \mathbb{R}$ by $\xi^{**}$ for all $t\in [0,T]$ but at most a fraction $\epsilon_1$.

Thus, the solution to Problem \ref{veri} for the system \eqref{bb} with all trajectories originating from the set $\mathcal{X}_0$ is presented below:

\textit{
If $[z_{\bm{x}_{0}}(t)-\xi^{**},z_{\bm{x}_{0}}(\cdot)+\xi^{**}]\cap \mathtt{Uns}=\emptyset$ for $\bm{x}_0\in \mathcal{X}_0$ and $t\in [0,T]$, we have that with confidence of at least $1-\beta_2$, the probability measure of inputs in $\mathcal{X}_0$ such that the amount of time the system \eqref{bb} with each of them spends inside $\mathtt{Uns}$ does not exceed $\epsilon_1 T$ with confidence of at least $1-\beta_1$, is larger than $1-\epsilon_2$.
}

Although the size of the linear program \eqref{lp1} for computing PAC models does not depend on the dimension of the system \eqref{bb}, it heavily depends on $\epsilon_1,\beta_1,\epsilon_2,\beta_2$ and the number of unknown parameters in a pre-specified PAC model template according to inequalities \eqref{NLP21} and \eqref{NLP22} in Theorem \ref{conclusion3}.
\begin{example}
\label{ex3}
Let's take the system in Example \ref{van} again as an instance to illustrate the case of all trajectories characterization. The input set is assumed to be  $\mathcal{X}_0=[1.25,1.55]\times [2.28,2.32]$.

Let $\beta_1=10^{-10}$, $\epsilon_1=0.3$, $\beta_2=10^{-10}$ and $\epsilon_2=0.5$. In this example we use $M=207$, $N=125$ and a polynomial $w(\bm{c},t)$ of degree $6$ as a mathematical model, which is input-independent and is linear in $\bm{c}$, to perform computations. The number $k+1$ of decision variables in \eqref{lp1} is $8$ and consequently $M\geq 207$ and $N\geq 125$ according to Theorem \ref{conclusion3}. The computation time for solving the resulting linear program is $150.32$ seconds. The reason that an input-independent model is used  is to reduce the number of decision variables in \eqref{lp1}, which further results in reduction of the size of extracted samples according to inequalities \eqref{NLP21} and \eqref{NLP22} and thus reduction of the size of the linear program \eqref{lp1}. These computations were performed on an i7-7500U 2.70GHz CPU with 32G RAM running Windows 10. 

We obtain $\xi^{**}=0.38$ via solving the linear program \eqref{lp1} with $U_c=U_{\xi}=100$. Therefore, with confidence of at least $1-10^{-10}$, the probability measure of inputs in $\mathcal{X}_0$ such that with confidence of at least $1-10^{-10}$,
\begin{equation}
    \label{13}
y_{\bm{x}_0}(t)\in [z_{\bm{x}_0}(t)-0.38,z_{\bm{x}_0}(t)+0.38]
\end{equation}
for all $t\in  [0,10]$ but at most a fraction $0.3$, is larger than $0.5$, where $z_{\bm{x}_{0}}(\cdot):[0,T] \rightarrow \mathbb{R}$ is the trajectory of the mathematical model $z(t)=w(\bm{c}^{**},t)$. Within the Monte-Carlo testing framework, we extract $10^4$ inputs $(\bm{x}'_{i,0})_{i=1}^{10^4}$ from $\mathcal{X}_0$ independently according to the probability distribution $P_{\bm{x}_0}$ and then obtain their corresponding ground truth trajectories for validating the above conclusion. Like Example \ref{van}, we take the time step $\Delta t= 10^{-5}$ and the states $\big(y_{\bm{x}'_{0,i}}(j\Delta t)\big)_{j=0}^{10^6}$ on the ground truth trajectory with the input $\bm{x}'_{0,i}$ to verify the satisfiability of \eqref{13}, where $i=1,\ldots,10^{4}$. The satisfiability ratio of $10^4$ inputs such that
\begin{equation*} 
y_{\bm{x}'_{0,i}}(j\Delta t)\in [z_{\bm{x}'_{0,i}}(j\Delta t)-0.38, z_{\bm{x}'_{0,i}}(j\Delta t)+0.38]
\end{equation*}
for all $j\in \{0,\ldots,10^6\}$ but at most a fraction $0.05$, is $100\%$.

Since $[z_{\bm{x}_{0}}(t)-\xi^{**},z_{\bm{x}_{0}}(\cdot)+\xi^{**}]\cap \mathtt{Uns}=\emptyset$ for $\bm{x}_0\in \mathcal{X}_0$ and $t\in [0,10]$, we have that with at least $1-10^{-10}$ confidence, the probability measure of inputs in $\mathcal{X}_0$ such that the amount of time the system \eqref{bb}  with each of them spends inside $\mathtt{Uns}$ does not exceed $3$ with at least $1-10^{-10}$ confidence, is larger than $0.5$.

\begin{figure}
\center
   \includegraphics[width=3.05in,height=1.3in]{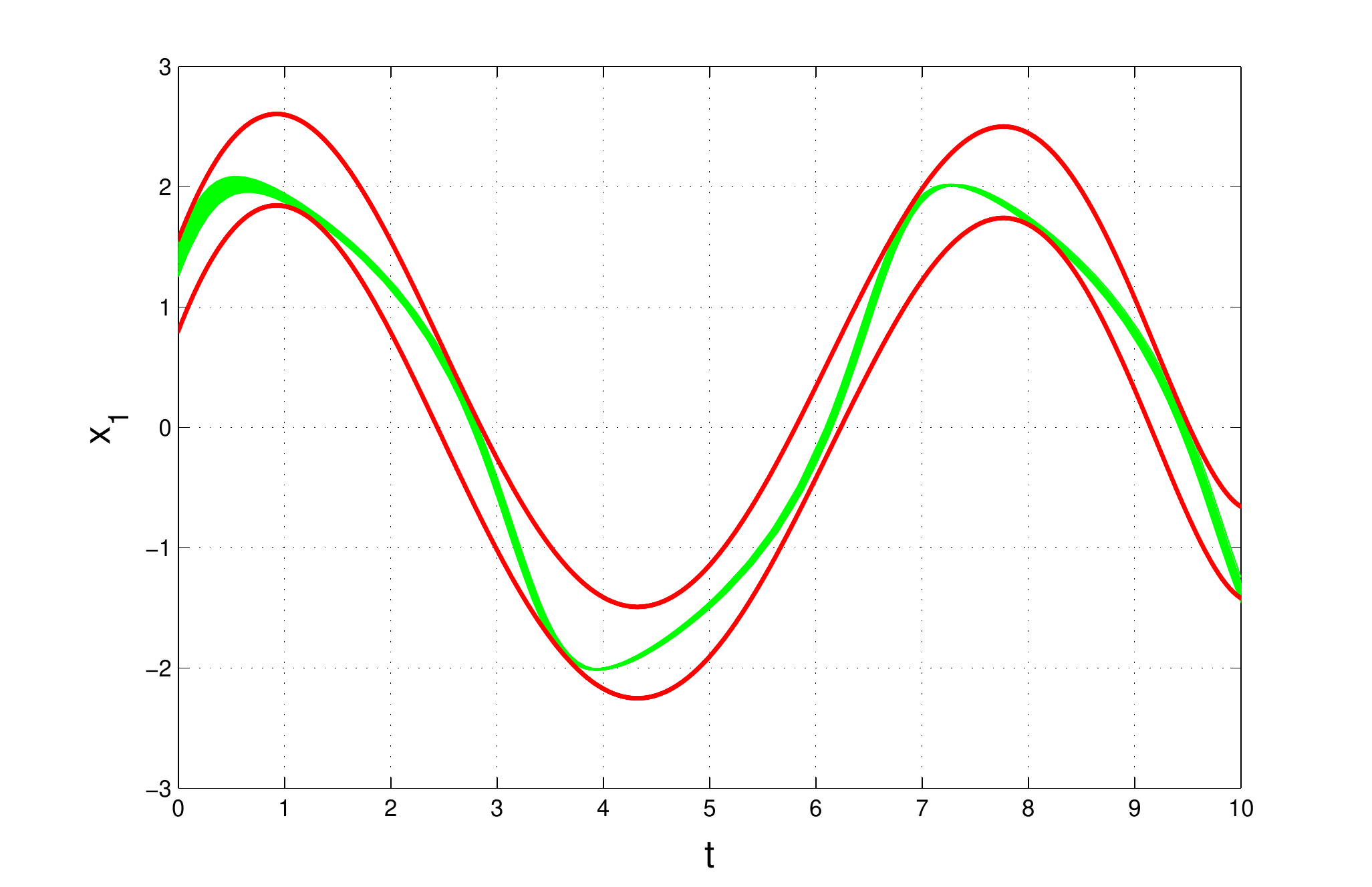}
    \caption{An illustration of all trajectories reachability for Example \ref{ex3} with $\mathcal{X}_0=[1.25,1.55]\times [2.28,2.32]$. The green curves denote the trajectories generated by the extracted $N$ inputs. The red curves denote $w(\bm{c}^{**},\cdot)-\xi^{**}:[0,T]\rightarrow \mathbb{R}$ and $w(\bm{c}^{**},\cdot)+\xi^{**}:[0,T]\rightarrow \mathbb{R}$ respectively. \oomit{The blue points denote the extracted $N$ inputs.}}
  \end{figure}
  \begin{figure}
  \centering
   \includegraphics[width=3.05in,height=1.3in]{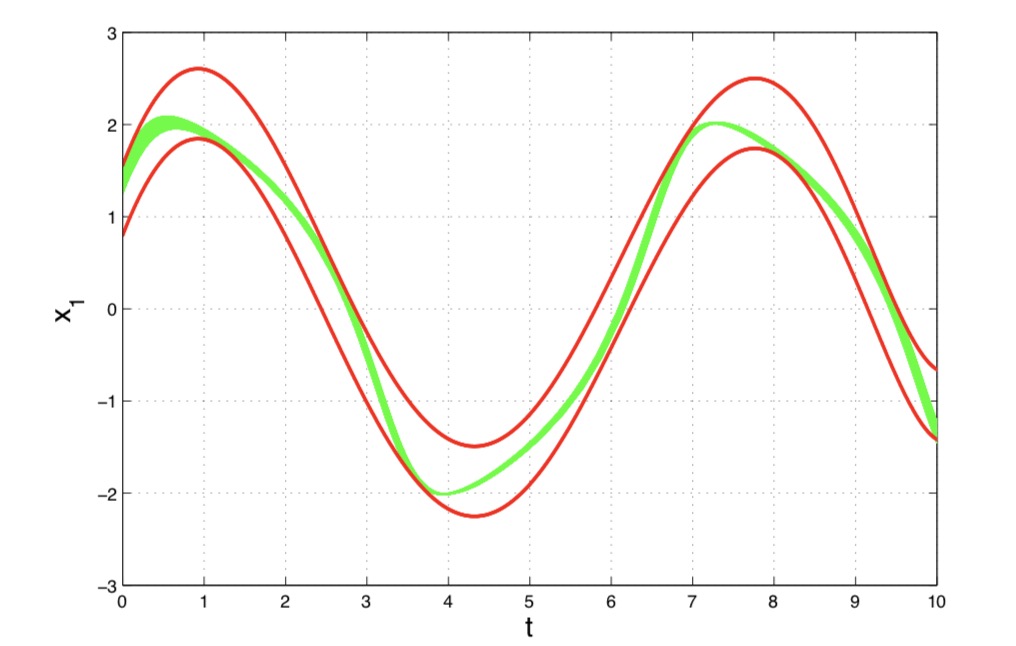}
    \caption{An illustration of Monte Carlo validation for Example \ref{ex3}. The green curves denote the extracted $10^4$ trajectories, and the red curves denote $w(\bm{c}^{**},\cdot)+\xi^{**}: [0,T]\rightarrow \mathbb{R}$ and $w(\bm{c}^{**},\cdot)-\xi^{**}: [0,T]\rightarrow \mathbb{R}$ respectively. \oomit{The blue points denote the extracted $10^4$ inputs.} }
    \end{figure}
\end{example}

\section{Experiments}
\label{experiments}
In this section we demonstrate the performance of our approach on three examples. All computations were performed on an i7-7500U 2.70GHz CPU with 32G RAM running Windows 10.

\begin{example}
\label{nine}
In this example we consider a black-box system of the form \eqref{bb} with $T=10$, $\mathcal{X}_0=[1.0,1.1]^9$ and $\mathtt{Uns}=\{y\in \mathbb{R}\mid y\leq -3\}$, which describes the time evolution of the state $x_1$ in  the following 9-dimensional biological model \cite{chen2015}: 
\begin{equation*}
\label{ninea}
\left\{
\begin{aligned}
&\dot{x}_1(t)=3x_3(t)-x_1(t)x_6(t), \dot{x}_2(t)=x_4(t)-x_2(t)x_6(t),\\
&\dot{x}_3(t)=x_1(t)x_6(t)-3x_3(t), \dot{x}_4(t)=x_2(t)x_6(t)-x_4(t),\\
&\dot{x}_5(t)=3x_3(t)+5x_1(t)-x_5(t),\\
&\dot{x}_6(t)=5x_5(t)+3x_3(t)+x_4(t)\\
&~~~~~~~~~~~~~~~~~~-x_6(t)(x_1(t)+x_2(t)+2x_8(t)+1),\\
&\dot{x}_7(t)=5x_4(t)+x_2(t)-0.5x_7(t),\\ &\dot{x}_8(t)=5x_7(t)-2x_6(t)x_8(t)+x_9(t)-0.2x_8(t),\\
&\dot{x}_9(t)=2x_6(t)x_8(t)-x_9(t).
\end{aligned}
\right.
\end{equation*}

Let $\epsilon_1=0.2$, $\beta_1=10^{-10}$, $\epsilon_2=0.3$ and $\beta_2=10^{-10}$. In this example we compute two polynomial models of degree 2 and 5 to illustrate our method.

1). We use $M=271$, $N=181$ and a polynomial $w(\bm{c},t)$ of degree $2$ as a mathematical model, which is input-independent and is linear in $\bm{c}$, to perform computations. Note that the number $k+1$ of decision variables in \eqref{lp1} is $4$ and consequently $M\geq 271$ and $N\geq 181 $ according to Theorem \ref{conclusion3}.  Via solving \eqref{lp1} with $U_c=U_{\xi}=100$ we obtain $\xi^{**}=0.17$. The computation time is $167.43$ seconds. Therefore, according to Theorem \ref{conclusion3}, we conclude that with at least $1-10^{-10}$ confidence, the probability measure of inputs in $\mathcal{X}_0$ such that with confidence of at least $1-10^{-10}$, 
\[y_{\bm{x}_0}(t)\in [z_{\bm{x}_{0}}(t)-0.17, z_{\bm{x}_{0}}(t)+0.17]\]
for all $t\in [0,10]$ but at most a fraction $0.2$, is larger than $0.7$, where $z_{\bm{x}_{0}}(\cdot):[0,T] \rightarrow \mathbb{R}$ is the trajectory of the model $z(t)=w(\bm{c}^{**},t)$. The reachability analysis is illustrated in Fig. \ref{41}.  Like Example \ref{ex3}, within the Monte-Carlo framework, we also extract $10^4$ inputs $(\bm{x}'_{i,0})_{i=1}^{10^4}$ to verify the conclusion, and obtain that the ratio of $10^4$ inputs such that
$y_{\bm{x}'_{i,0}}(j\Delta t)\in [z_{\bm{x}'_{i,0}}(j\Delta t)-0.17, z_{\bm{x}'_{i,0}}(j\Delta t)+0.17]$
for all $j\in \{0,\ldots,10^6\}$ but at most a fraction $0.05$, is larger than $97.87\%$, where $\Delta t=10^{-5}$.

Since $[z_{\bm{x}_{0}}(t)-0.17, z_{\bm{x}_{0}}(t)+0.17] \cap \mathtt{Uns}=\emptyset$ for $t\in [0,10]$ and $\bm{x}_0\in \mathcal{X}_0$, we have that with at least $1-10^{-10}$ confidence, the probability measure of inputs in $\mathcal{X}_0$ such that the amount of time the system \eqref{bb} with each of them spends inside $\mathtt{Uns}$ does not exceed $2$ with confidence of at least $1-10^{-10}$, is larger than $0.7$. 

\begin{figure}
\centering
   \includegraphics[width=3.05in,height=1.3in]{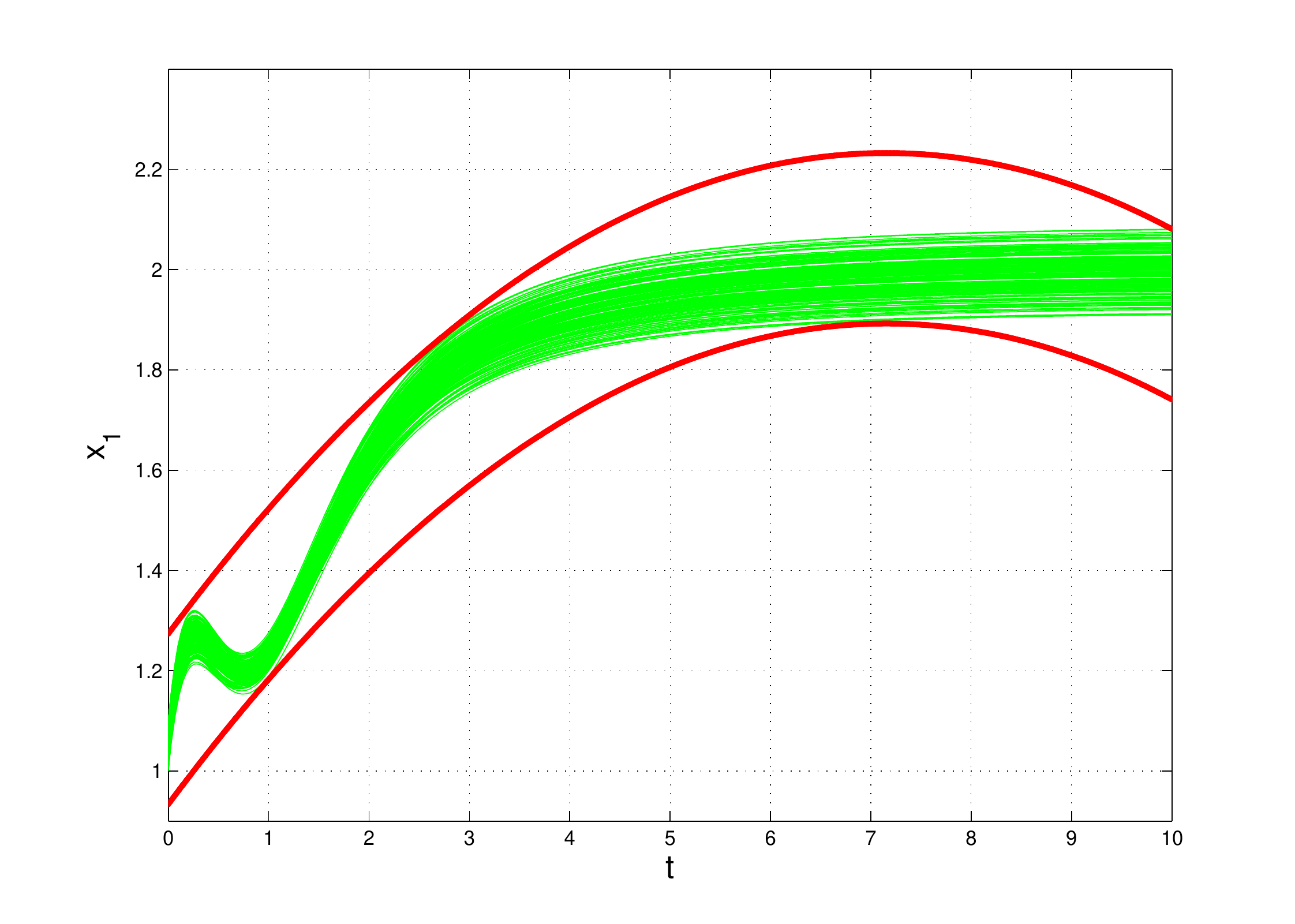}
  \caption{An illustration of trajectories reachability for Example \ref{nine} with the polynomial PAC model of degree 2. The green curves denote the extracted $181$ trajectories. The red curves denote $w(\bm{c}^{**},\cdot)-\xi^{**}:[0,T]\rightarrow \mathbb{R}$ and $w(\bm{c}^{**},\cdot)+\xi^{**}:[0,T]\rightarrow \mathbb{R}$ respectively. \oomit{The blue points denote the $181$ inputs in the $x_1-x_2$ plane.} }
  \label{41}
  \end{figure}

2). We use $M=301$, $N=201$ and a polynomial $w(\bm{c},t)$ of degree $5$ as a mathematical model, which is input-independent and is linear in $\bm{c}$, to perform computations. Note that the number $k+1$ of decision variables in \eqref{lp1} is $7$ and consequently $M\geq 301$ and $N\geq 201 $ according to Theorem \ref{conclusion3}.  Via solving \eqref{lp1} with $U_c=U_{\xi}=100$ we obtain $\xi^{**}=0.12$. The computation time is $223.83$ seconds.  Therefore,  according to Theorem \ref{conclusion3}, we conclude that with at least $1-10^{-10}$ confidence, the probability measure of inputs in $\mathcal{X}_0$ such that with confidence of at least $1-10^{-10}$, 
\[y_{\bm{x}_0}(t)\in [z_{\bm{x}_{0}}(t)-0.12, z_{\bm{x}_{0}}(t)+0.12]\]
for all $t\in [0,10]$ but at most a fraction $0.2$, is larger than $0.7$, where $z_{\bm{x}_{0}}(\cdot):[0,T] \rightarrow \mathbb{R}$ is the trajectory of the model $z(t)=w(\bm{c}^{**},t)$. The reachability analysis is illustrated in Fig. \ref{42}.  Within the Monte-Carlo framework we use the $10^4$ inputs $(\bm{x}'_{i,0})_{i=1}^{10^4}$ in the first case to verify the conclusion, and obtain that the ratio of $10^4$ inputs such that
$y_{\bm{x}'_{i,0}}(j\Delta t)\in [z_{\bm{x}'_{i,0}}(j\Delta t)-0.12, z_{\bm{x}'_{i,0}}(j\Delta t)+0.12]$
for all $j\in \{0,\ldots,10^6\}$ but at most a fraction $0.05$, is larger than $98.56\%$, where $\Delta t=10^{-5}$.

Similarly, due to the fact that $[z_{\bm{x}_{0}}(t)-0.12, z_{\bm{x}_{0}}(t)+0.12] \cap \mathtt{Uns}=\emptyset$ for $t\in [0,10]$ and $\bm{x}_0\in \mathcal{X}_0$, we have that with at least $1-10^{-10}$ confidence, the probability measure of inputs in $\mathcal{X}_0$ such that the amount of time the system \eqref{bb} with each of them spends inside the unsafe set $\mathtt{Uns}$ does not exceed $2$ with confidence of at least $1-10^{-10}$, is larger than $0.7$. 

From the comparison results illustrated in Fig. \ref{fig43} for the above two cases with the same {\rm PAC} guarantees, i.e., $\epsilon_1$, $\epsilon_2$, $\beta_1$ and $\beta_2$ are the same, we observe that polynomial models of higher degree could describe the internal dynamics of the system \eqref{bb} more exactly, but with more computation time.

\begin{figure}
\centering
   \includegraphics[width=3.05in,height=1.3in]{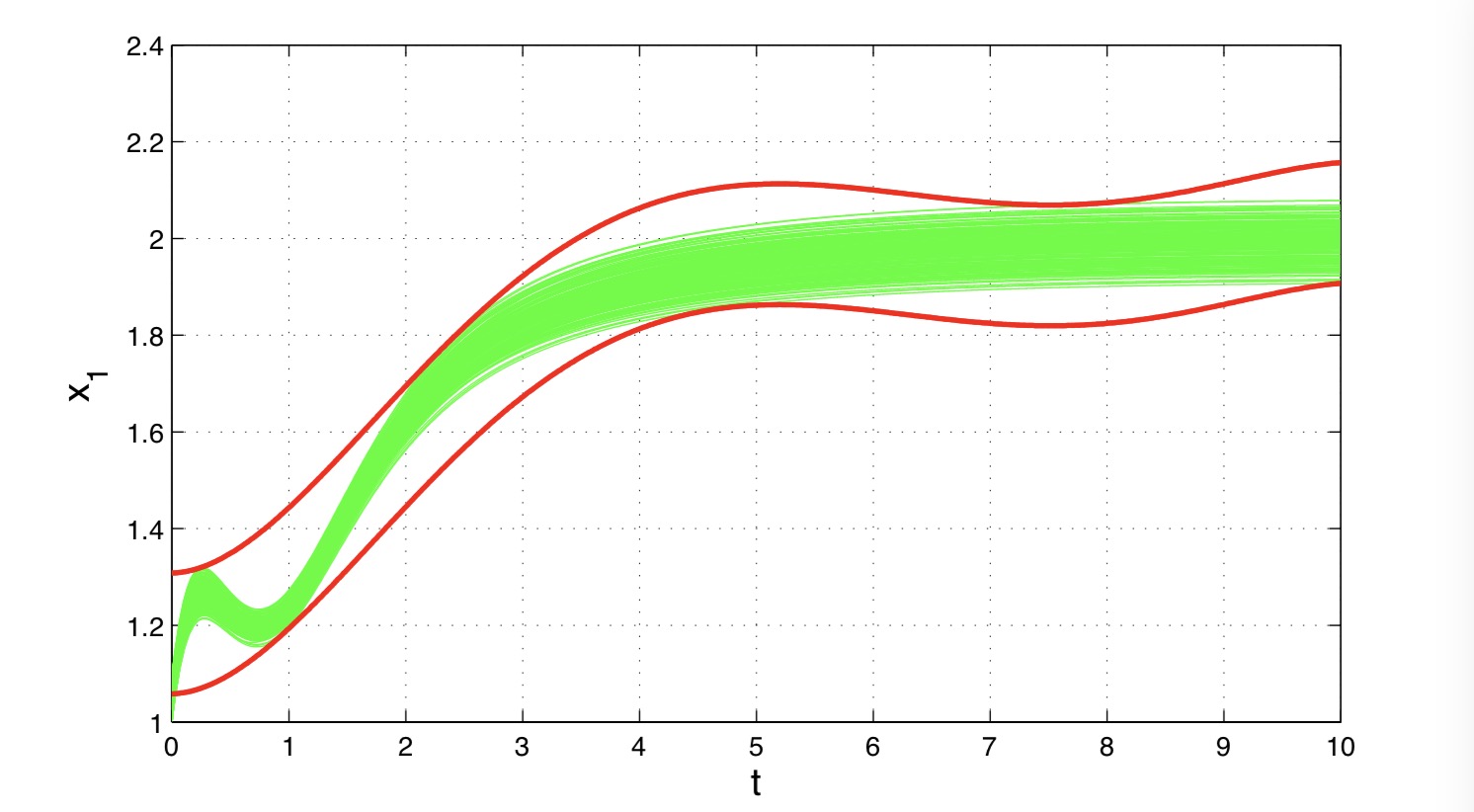}
  \caption{An illustration of trajectories reachability for Example \ref{nine} with the polynomial model of degree 5. The green curves denote the extracted $201$ trajectories. The red curves denote $w(\bm{c}^{**},\cdot)-\xi^{**}:[0,T]\rightarrow \mathbb{R}$ and $w(\bm{c}^{**},\cdot)+\xi^{**}:[0,T]\rightarrow \mathbb{R}$ respectively. \oomit{The blue points denote the $201$ inputs in the $x_1-x_2$ plane.} }
  \label{42}
  \end{figure}
  \begin{figure}
  \centering
   \includegraphics[width=3.05in,height=1.3in]{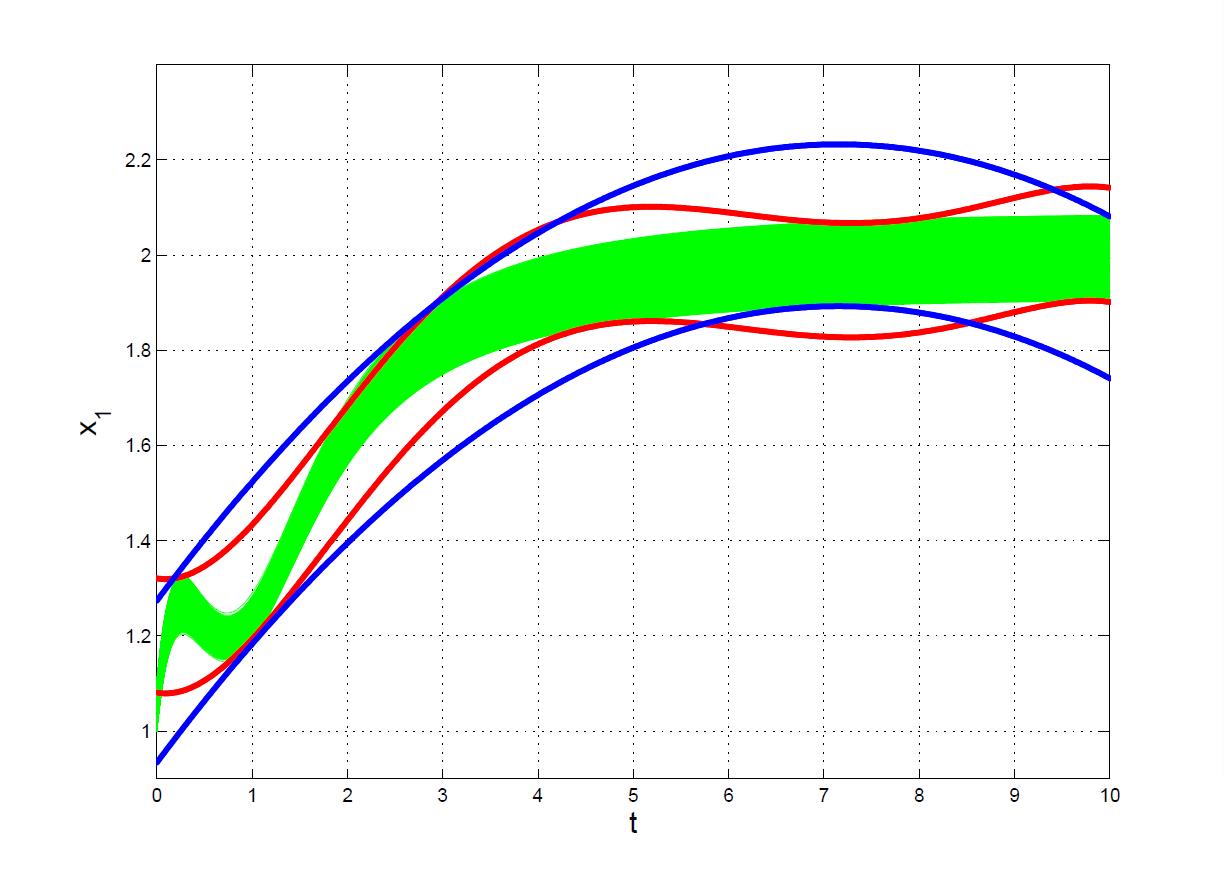}
     \caption{An illustration of Monte Carlo validation for Example \ref{nine}. The green curves denote the extracted $10^4$ trajectories. The red curves denote $w(\bm{c}^{**},\cdot)+\xi^{**}: [0,T]\rightarrow \mathbb{R}$ and $w(\bm{c}^{**},\cdot)-\xi^{**}: [0,T]\rightarrow \mathbb{R}$ respectively, where $w(\bm{c}^{**},\cdot)$ is the model of degree 5. The blue curves denote $w(\bm{c}^{**},\cdot)+\xi^{**}: [0,T]\rightarrow \mathbb{R}$ and $w(\bm{c}^{**},\cdot)-\xi^{**}: [0,T]\rightarrow \mathbb{R}$ respectively, where $w(\bm{c}^{**},\cdot)$ is the model of degree 2. \oomit{Blow: The blue points denote the extracted $10^4$ inputs in the $x_1-x_2$ plane.} }
     \label{fig43}
\end{figure}
\end{example}

\begin{example}
\label{odehigh}
To demonstrate the applicability of our approach to higher dimensional systems, we consider a scalable system of the form \eqref{bb} with $T=2$, $\mathcal{X}_0=[0.5,0.6]^{101}$ and $\mathtt{Uns}=\{y\in \mathbb{R}\mid y\geq 3.0\}$, describing the time evolution of the state $x_1$ in an ordinary differential equation \cite{ratschan2017}: 
\begin{equation*}
\begin{cases}
\dot{x}_1(t)=1+\frac{1}{l}(\sum_{i=1}^l x_{i+1}(t)+x_{i+2}(t)),\\
\dot{x}_2(t)=x_3(t),\dot{x}_3(t)=-10\sin x_2(t)-x_2(t)\\
\ldots\\
\dot{x}_{2l}(t)=x_{2l+1}(t),\dot{x}_{2l+1}(t)=-10\sin x_{2l}(t)-x_2(t)
\end{cases}
\end{equation*}
where $l=50$. 

Let $\epsilon_1=0.2$, $\beta_1=10^{-10}$, $\epsilon_2=0.2$ and $\beta_2=10^{-10}$. In this example we compute two polynomial models of degree 2 and 4 to illustrate our method.

1). We use $M=271$, $N=271$ and a polynomial $w(\bm{c},t)$ of degree $2$ as a mathematical model, which is input-independent, to perform computations. Note that the number $k+1$ of decision variables in \eqref{lp1} is $4$ and consequently $M\geq 271$ and $N\geq 271$ according to Theorem \ref{conclusion3}. Via solving \eqref{lp1} with $U_c=U_{\xi}=100$, we obtain that $\xi^{**}=0.36$. The computation time is $398.23$ seconds. According to Theorem \ref{conclusion3}, we have that with at least $1-10^{-10}$ confidence, the probability measure of inputs in $\mathcal{X}_0$ such that with confidence of at least $1-10^{-10}$, 
\[y_{\bm{x}_0}(t)\in [z_{\bm{x}_{0}}(t)-0.36, z_{\bm{x}_{0}}(t)+0.36]\]
for all $t\in [0,2]$ but at most a fraction $0.2$, is larger than $0.8$, where $z_{\bm{x}_{0}}(\cdot):[0,T] \rightarrow \mathbb{R}$ is the trajectory of the mathematical model $z(t)=w(\bm{c}^{**},t)$. The reachability analysis is illustrated in Fig. \ref{51}. Like Example \ref{nine}, within the Monte-Carlo testing framework, we also extract $10^4$ inputs $(\bm{x}'_{i,0})_{i=1}^{10^4}$ to verify the above conclusion, and obtain that the ratio of $10^4$ inputs  such that 
$y_{\bm{x}'_{i,0}}(j\Delta t)\in [z_{\bm{x}'_{i,0}}(j\Delta t)-0.36, z_{\bm{x}'_{0}}(j\Delta t)+0.36]$
for all $j\in \{0,\ldots,10^5\}$ is equal to $98.07\%$, where $\Delta t=\frac{2}{10^5}$ and $i=1,\ldots,10^4$.

Since $[z_{\bm{x}_{0}}(t)-0.36, z_{\bm{x}_{0}}(t)+0.36] \cap \mathtt{Uns}=\emptyset$ for $t\in [0,2]$ and $\bm{x}_0\in \mathcal{X}_0$, we have that with at least $1-10^{-10}$ confidence, the probability measure of inputs in $\mathcal{X}_0$ such that the amount of time the system \eqref{bb} with each of them spends inside $\mathtt{Uns}$ does not exceed $0.4$ with at least $1-10^{-10}$ confidence, is larger than $0.8$. 

\begin{figure}
\centering
   \includegraphics[width=3.05in,height=1.3in]{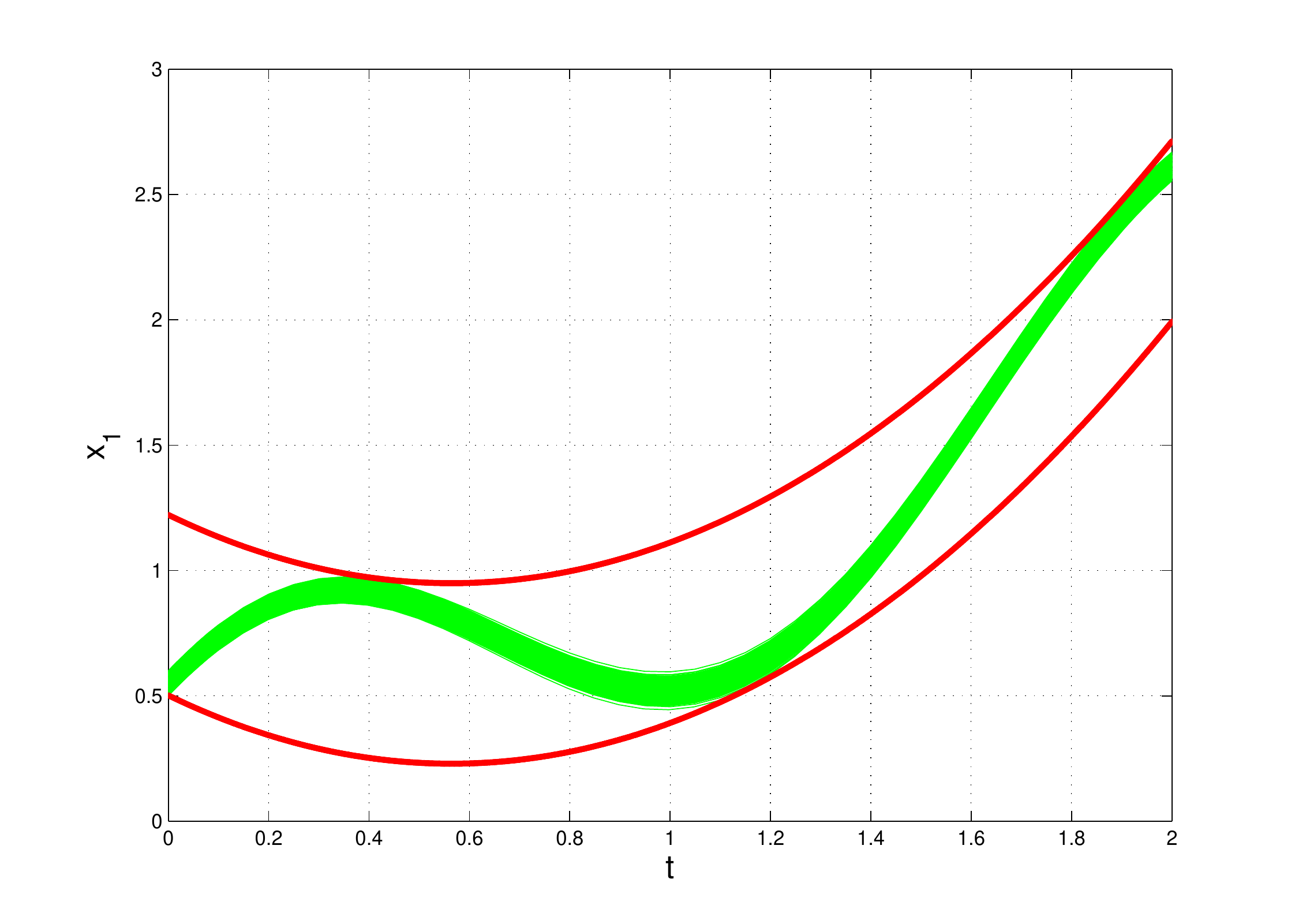}
  \caption{An illustration of all trajectories reachability for Example \ref{odehigh} with a polynomial model of degree 2. The green curves denote the extracted $271$ trajectories. The red curves denote $w(\bm{c}^{**},\cdot)-\xi^{**}:[0,T]\rightarrow \mathbb{R}$ and $w(\bm{c}^{**},\cdot)+\xi^{**}:[0,T]\rightarrow \mathbb{R}$ respectively. \oomit{The blue points denote the extracted $271$ inputs in the $x_1-x_2$ plane.} }
  \label{51}
  \end{figure}
  
2). We use $M=291$, $N=291$ and a polynomial $w(\bm{c},t)$ of degree $4$ as a mathematical model, which is input-independent, to perform computations. Note that the number $k+1$ of decision variables in \eqref{lp1} is $6$ and consequently $M\geq 291$ and $N\geq 291$ according to Theorem \ref{conclusion3}. Via solving \eqref{lp1} with $U_c=U_{\xi}=100$, we obtain that $\xi^{**}=0.12$. The computation time is $398.23$ seconds.  According to Theorem \ref{conclusion3}, we have that with at least $1-10^{-10}$ confidence, the probability measure of inputs in $\mathcal{X}_0$ such that with confidence of at least $1-10^{-10}$, 
\[y_{\bm{x}_0}(t)\in [z_{\bm{x}_{0}}(t)-0.12, z_{\bm{x}_{0}}(t)+0.12]\]
for all $t\in [0,2]$ but at most a fraction $0.2$, is larger than $0.8$, where $z_{\bm{x}_{0}}(\cdot):[0,T] \rightarrow \mathbb{R}$ is the trajectory of the mathematical model $z(t)=w(\bm{c}^{**},t)$. The reachability analysis is illustrated in Fig. \ref{52}. Also, within the Monte-Carlo testing framework we use the $10^4$ inputs $(\bm{x}'_{i,0})_{i=1}^{10^4}$ in the first case to verify the above conclusion, and obtain that the ratio of $10^4$ inputs such that 
$y_{\bm{x}'_{i,0}}(j\Delta t)\in [z_{\bm{x}'_{i,0}}(j\Delta t)-0.12, z_{\bm{x}'_{0}}(j\Delta t)+0.12]$
for all $j\in \{0,\ldots,10^5\}$ is equal to $1$, where $\Delta t=\frac{2}{10^5}$ and $i=1,\ldots,10^4$.

Similar to the first case, we have that with at least $1-10^{-10}$ confidence, the probability measure of inputs in $\mathcal{X}_0$ such that the amount of time the system \eqref{bb} with each of them spends inside the unsafe set $\mathtt{Uns}$ does not exceed $0.4$ with confidence of at least $1-10^{-10}$, is larger than $0.8$.

Like Example \ref{nine}, by comparing the results in Fig. \ref{53} for the above two cases with the same {\rm PAC} guarantees, i.e., $\epsilon_1$, $\epsilon_2$, $\beta_1$ and $\beta_2$ are the same, we also obtain that polynomial models of higher degree could capture the internal dynamics of the system \eqref{bb} more exactly, but with more computation time.

\begin{figure}
\centering
   \includegraphics[width=3.05in,height=1.2in]{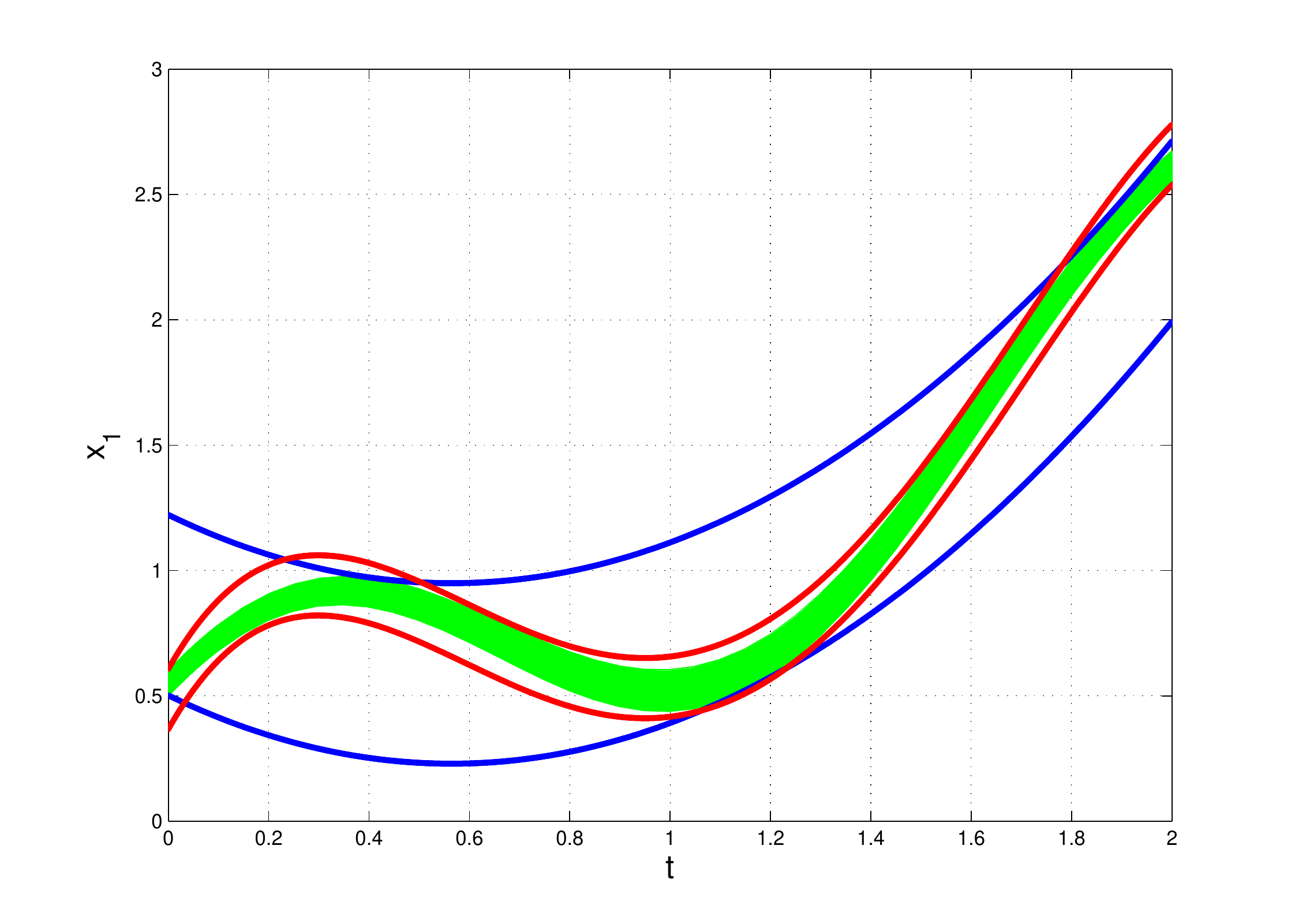}
  \caption{An illustration of trajectories reachability for Example \ref{odehigh} with a polynomial model of degree 4. The green curves denote the extracted $291$ trajectories. The red curves denote $w(\bm{c}^{**},\cdot)-\xi^{**}:[0,T]\rightarrow \mathbb{R}$ and $w(\bm{c}^{**},\cdot)+\xi^{**}:[0,T]\rightarrow \mathbb{R}$ respectively. \oomit{The blue points denote the extracted $291$ inputs in the $x_1-x_2$ plane.} }
  \label{52}
  \end{figure}
  \begin{figure}
  \centering
   \includegraphics[width=3.05in,height=1.2in]{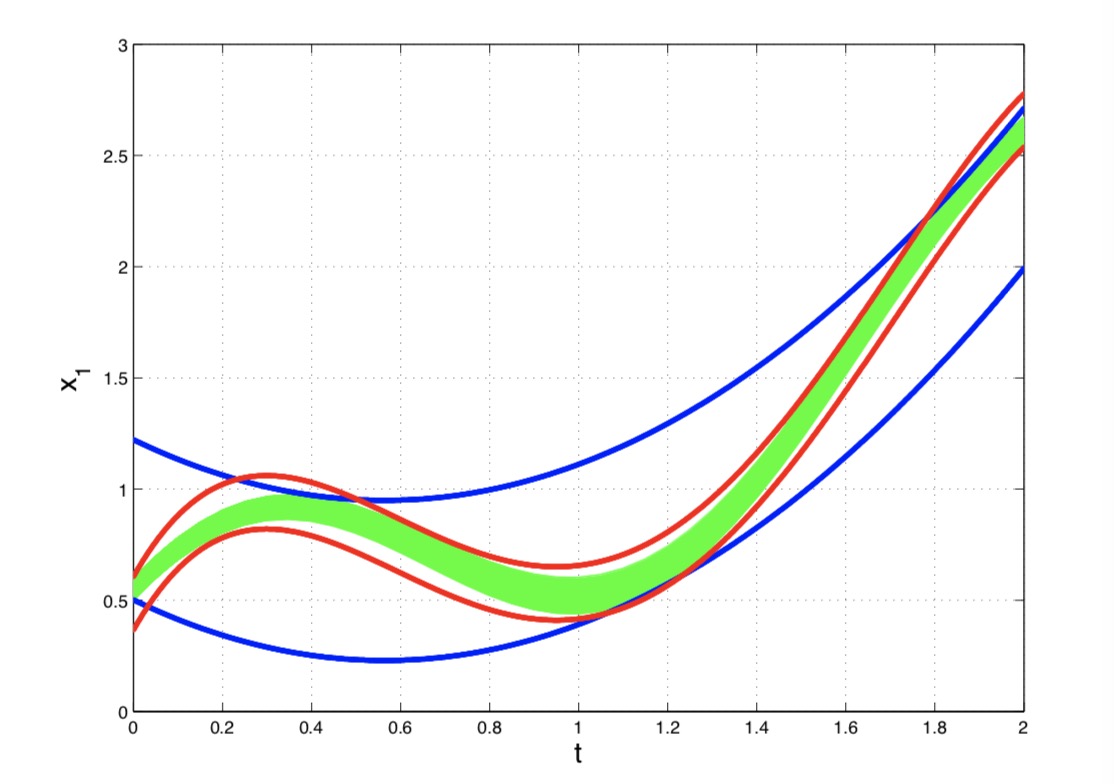}
     \caption{An illustration of Monte Carlo validation for Example \ref{odehigh}. The green curves denote the $10^4$ trajectories. The red curves denote $w(\bm{c}^{**},\cdot)+\xi^{**}: [0,T]\rightarrow \mathbb{R}$ and $w(\bm{c}^{**},\cdot)-\xi^{**}: [0,T]\rightarrow \mathbb{R}$ respectively, where $w(\bm{c}^{**},\cdot)$ is the {\rm PAC} model of degree 4. The blue curves denote $w(\bm{c}^{**},\cdot)+\xi^{**}: [0,T]\rightarrow \mathbb{R}$ and $w(\bm{c}^{**},\cdot)-\xi^{**}: [0,T]\rightarrow \mathbb{R}$ respectively, where $w(\bm{c}^{**},\cdot)$ is the {\rm PAC} model of degree 2. \oomit{The blue points denote the $10^4$ inputs in the $x_1-x_2$ plane. }}
     \label{53}
\end{figure}
\end{example}

\begin{example}
\label{delay1}
In this example we show a strategy to overcome the issue of solving large-scale linear programs based on a black-box system of the form \eqref{bb} which describes the time evolution of the state $x_1$ in the two-dimensional delay differential equation
\begin{equation*}
\left\{
\begin{aligned}
&\dot{x}_1(t)=ax_1(t)(1-\frac{x_1(t)}{m})+bx_1(t)x_2(t)\\
&\dot{x}_2(t)=cx_2(t)+dx_1(t-\tau)x_2(t-\tau)
\end{aligned}
\right.
\end{equation*}
where $\tau=0.1$, $a=0.25$, $m=200$, $b=-0.01$, $c=-1.00$ and $d=0.01$. The delay differential equation was a model for predator-prey populations. 

Assume that $T=10$, the initial condition $\bm{x}(t)$ over $t\in [-0.1,0]$ is a constant vector falling within $\mathcal{X}_0=\{(x_1,x_2)\mid (x_1+5)^2+(x_2+5)^2\leq 1\}$ and $\mathtt{Uns}=\{y\mid y\geq 40\}$.

Let $\epsilon_1=0.1$, $\beta_1=10^{-10}$, $\epsilon_2=0.1$ and $\beta_2=10^{-10}$. In this example we first use input-dependent polynomial models of degree $4$ to illustrate this strategy, and then use input-independent polynomial models of degree $4$ to illustrate it.

1).  Input-dependent Models: If a generic  polynomial input-dependent model template of degree 4, which is formed by choosing all monomials of degree up to 4 as the basis polynomials,  is employed, the number $k+1$ of decision variables in \eqref{lp1} is $36$ and consequently $M\geq 1181$ and $N\geq 1181$ according to Theorem \ref{conclusion3}. This leads to a large-scale linear program, producing heavy computational burden. As a result, we did not obtain results within two hours via solving this large-scale linear program.

Our strategy for avoiding large-scale linear programs is as follows: a small family of datum is first employed to compute an initial estimate of the coefficients $\bm{c}$, and then determine the values of some coefficients based on the computed $\bm{c}$ and leave the remaining ones unknown, reducing the number of decision variables in \eqref{lp1} and thus the size of the resulting linear program.

In the experiment we first solve the linear program \eqref{lp1} with $M=50$ and $N=50$ to obtain a model $w'(\bm{c}^{**},\bm{x},t)$ with the computation time of 1.82 seconds, and then use the computed $w'(\bm{c}^{**},\bm{x},t)$ to perform computations on the linear program \eqref{lp1} with $M=N=481$ and $U_c=U_{\xi}=100$. Note that the number $k+1$ of decision variables in \eqref{lp1} becomes $1$ in this setting and consequently $M\geq 481$ and $N\geq 481$ according to Theorem \ref{conclusion3}. Via solving \eqref{lp1} with $U_c=U_{\xi}=100$, we obtain that $\xi^{**}=1.49$ with the computation time of 268.67 seconds. The reachability analysis is illustrated in Fig. \ref{61}. Therefore, according to Theorem \ref{conclusion3}, we conclude that with at least $1-10^{-10}$ confidence, the probability measure of inputs in $\mathcal{X}_0$ such that with confidence of at least $1-10^{-10}$, $y_{\bm{x}_0}(t)\in [z_{\bm{x}_{0}}(t)-1.49, z_{\bm{x}_{0}}(t)+1.49]$
for all $t\in [0,10]$ but at most a fraction $0.1$, is larger than $0.9$, where $z_{\bm{x}_{0}}(\cdot):[0,T] \rightarrow \mathbb{R}$ is the trajectory of the mathematical model $z(t)=w'(\bm{c}^{**},\bm{x},t)$. Also, within the Monte-Carlo framework, we extract $10^4$ inputs $(\bm{x}'_{i,0})_{i=1}^{10^4}$ to verify the above conclusion, and obtain that the ratio of $10^4$ inputs such that
$y_{\bm{x}'_{i,0}}(j\Delta t)\in [z_{\bm{x}'_{i,0}}(j\Delta t)-1.49, z_{\bm{x}'_{i,0}}(j\Delta t)+1.49]$
for all $j\in \{0,\ldots,10^6\}$ is $100\%$, where $\Delta t=10^{-5}$.

Since $[z_{\bm{x}_{0}}(t)-1.49, z_{\bm{x}_{0}}(t)+1.49] \cap \mathtt{Uns}=\emptyset$ for $t\in [0,10]$ and $\bm{x}_0\in \mathcal{X}_0$, we have that with at least $1-10^{-10}$ confidence, the probability measure of inputs in $\mathcal{X}_0$ such that the amount of time the system \eqref{bb} with each of them spends inside $\mathtt{Uns}$ does not exceed $1$ with confidence of at least $1-10^{-10}$, is larger than $0.9$. 

2). Input-independent  Models:  If an input-independent polynomial  template of degree $4$ is used to perform computations, the number $k+1$ of decision variables in \eqref{lp1} is $6$ and consequently $M\geq 581$ and $N\geq 581$ according to Theorem \ref{conclusion3}. Via solving the linear program \eqref{lp1} with $M=N=581$ and $U_c=U_{\xi}=100$, we obtain $\xi^{**}=24.84$ with the computation time of 6634.51 seconds. The reachability analysis is illustrated in Fig. \ref{62}.

We also adopt the strategy presented in the above case for reducing the computation cost.  We first solve the linear program \eqref{lp1} with $M=N=50$ and $U_c=U_{\xi}=100$ to obtain a $w'(\bm{c}^{**},t)$ with the computation time of 1.65 seconds, and then use the computed $w'(\bm{c}^{**},t)$ to perform computations on the linear program \eqref{lp1} with $M=N=481$ and $U_c=U_{\xi}=100$. Note that the number $k+1$ of decision variables in \eqref{lp1} becomes $1$ in this setting and consequently $M\geq 481$ and $N\geq 481$ according to Theorem \ref{conclusion3}. Via solving \eqref{lp1} with $U_c=U_{\xi}=100$, we obtain that $\xi^{**}=25.96$ with the computation time of 71.09 seconds. The reachability analysis is illustrated in Fig. \ref{62} as well.
The safety guarantee is the same with the case of using input-dependent models. Similarly, within the Monte-Carlo framework, we use the $10^4$ inputs $(\bm{x}'_{i,0})_{i=1}^{10^4}$ in the first case to verify the above conclusion, and obtain that the ratio of $10^4$ inputs such that
$y_{\bm{x}'_{i,0}}(j\Delta t)\in [z_{\bm{x}'_{i,0}}(j\Delta t)-25.96, z_{\bm{x}'_{i,0}}(j\Delta t)+25.96]$
for all $j\in \{0,\ldots,10^6\}$ is equal to $100\%$, where $\Delta t=10^{-5}$.

Via comparing the results in Fig. \ref{61} and \ref{62} for the above two cases with the same {\rm PAC} guarantees, i.e., $\epsilon_1$, $\epsilon_2$, $\beta_1$ and $\beta_2$ are the same, we conclude that input-dependent polynomial models could capture the internal dynamics of the system \eqref{bb} more exactly than input-independent ones, but also with more computation cost.

\begin{figure}
\centering
   \includegraphics[width=3.05in,height=1.2in]{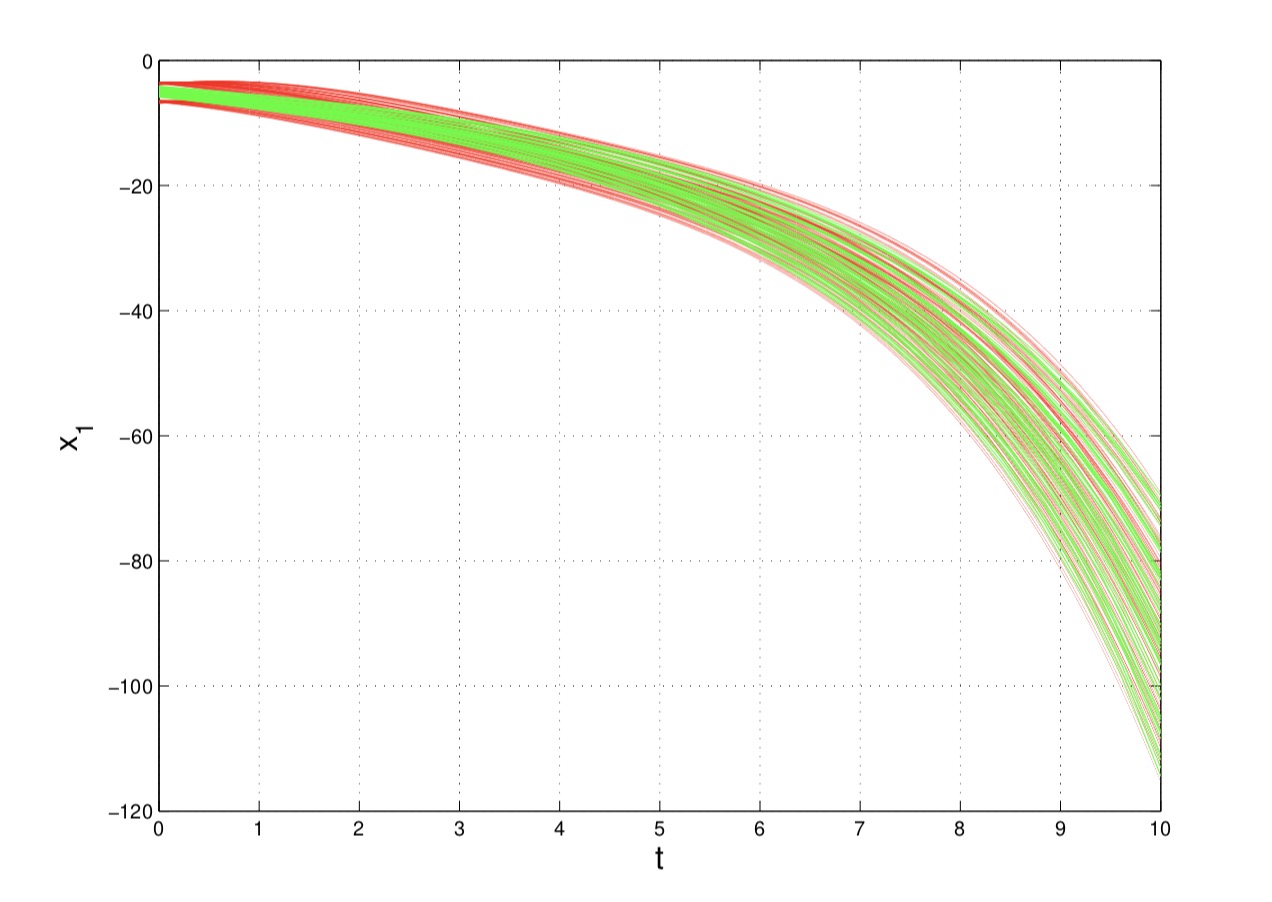}
  \caption{An illustration of trajectories reachability for Example \ref{delay1} with the input-dependent model $w'(\bm{c}^*,\bm{x},t)$. The green curves denote some extracted trajectories. The red curves denote the corresponding $w'(\bm{c}^{**},\bm{x},\cdot)-\xi^{**}:[0,T]\rightarrow \mathbb{R}$ and $w'(\bm{c}^{**},\bm{x},\cdot)+\xi^{**}:[0,T]\rightarrow \mathbb{R}$ respectively. \oomit{The blue points denote the extracted $291$ inputs in the $x_1-x_2$ plane.} }
  \label{61}
  \end{figure}
  
  \begin{figure}
\centering
   \includegraphics[width=3.05in,height=1.2in]{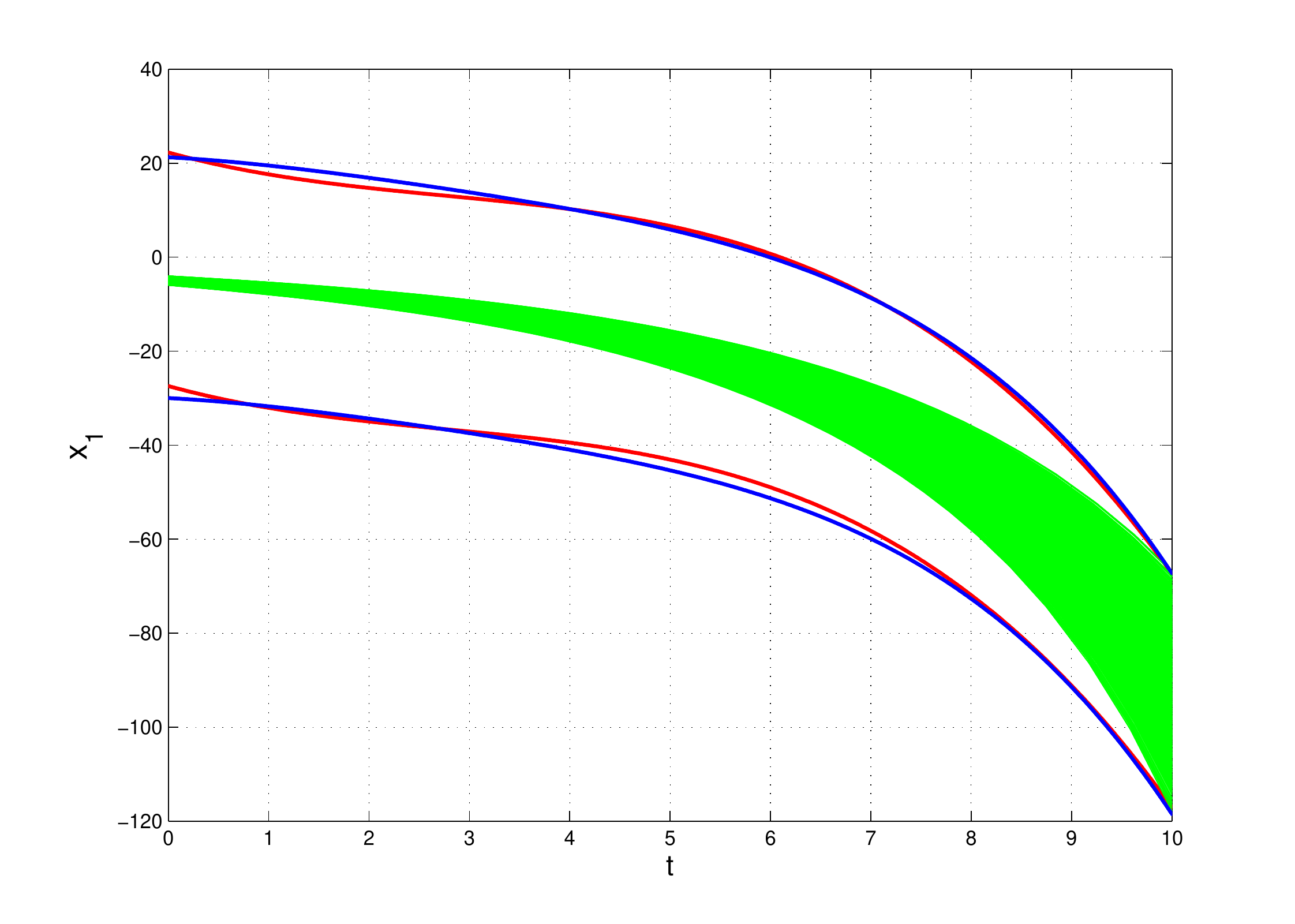}
  \caption{An illustration of trajectories reachability for Example \ref{delay1} with the input-independent model $w(\bm{c}^*,t)$. The green curves denote some extracted trajectories. The red curves denote the corresponding $w(\bm{c}^{**}, \cdot)-\xi^{**}:[0,T]\rightarrow \mathbb{R}$ and $w(\bm{c}^{**},\cdot)+\xi^{**}:[0,T]\rightarrow \mathbb{R}$ with $\xi^{**}=24.84$ respectively. The blue curves denote the corresponding $w'(\bm{c}^{**}, \cdot)-\xi^{**}:[0,T]\rightarrow \mathbb{R}$ and $w'(\bm{c}^{**},\cdot)+\xi^{**}:[0,T]\rightarrow \mathbb{R}$ with $\xi^{**}=25.96$ respectively. }
  \label{62}
  \end{figure}  
\end{example}

\section{Conclusion}
\label{conclusion}
In this paper we proposed a novel PAC model checking approach for finite-time safety verification of black-box continuous-time dynamical systems, which are represented by observed datum, within the framework of PAC learning. In this approach, a PAC model of the system was computed such that the time-evolving trajectories of the black-box dynamical system over finite-time horizons fall within the range of the PAC model plus a bounded interval with error probabilities and confidence levels, thus facilitating the formal characterization of the satisfiability of safety requirements. Both the PAC model and the bounded interval were obtained via scenario optimization, which finally boil down to a linear program. Three examples demonstrated the performance of our approach.

In the future we would extend our method to safety verification of black-box systems, whose internal mechanisms are described by hybrid dynamical systems that exhibit both continuous and discrete dynamic behavior. Also, we would like to extend our method for safety verification of black-box systems with noise measurements and inputs.

\bibliographystyle{abbrv}
\bibliography{reference}
\end{document}